\pdfoutput=1
\documentclass[acmsmall,screen,nonacm]{acmart}

\usepackage{amsmath}
\usepackage{algorithm}
\usepackage{algorithmic}
\usepackage{graphicx}
\usepackage{booktabs}
\usepackage{tabularx}
\usepackage{multirow}
\usepackage{subcaption}
\usepackage{xspace}
\usepackage{listings}
\usepackage{tikz}
\usetikzlibrary{arrows.meta,positioning,shapes.geometric,fit}

\setcopyright{none}
\renewcommand\footnotetextcopyrightpermission[1]{}
\newcommand{\sysname}{AccelSync\xspace}
\newcommand{\apf}{accelerator pipeline\xspace}
\newcommand{\hb}{\ensuremath{\rightarrow_{\mathit{hb}}}}
\newcommand{\po}{\ensuremath{\rightarrow_{\mathit{po}}}}
\newcommand{\so}{\ensuremath{\rightarrow_{\mathit{so}}}}
\newcommand{\bo}{\ensuremath{\rightarrow_{\mathit{bo}}}}

\begin{document}

\title{AccelSync: Verifying Synchronization Coverage in Accelerator Pipeline Programs}
\author{Hangcheng An}
\affiliation{%
  \institution{Beihang University}
  \city{Beijing}
  \country{China}
}

\author{Rui Wang}
\email{wangrui@buaa.edu.cn}
\affiliation{%
  \institution{Beihang University}
  \city{Beijing}
  \country{China}
}

\author{Depei Qian}
\affiliation{%
  \institution{Beihang University}
  \city{Beijing}
  \country{China}
}
\authorsaddresses{Corresponding author: Rui Wang, wangrui@buaa.edu.cn}

\begin{abstract}
AI accelerator operators are compiled into multi-stage pipeline programs where DMA, vector, matrix, and scalar units execute concurrently on shared on-chip buffers.
A missing or misplaced synchronization primitive introduces hardware-visible data races that escape both simulation and golden testing, because neither models the accelerator's cross-unit visibility semantics.
We formalize accelerator pipeline programs as a restricted concurrent language, define a parameterized hardware event semantics with three ordering relations---program order, synchronization order, and barrier order---and reduce the correctness question to \emph{barrier sufficiency}---equivalently, \emph{synchronization coverage}---whether every cross-unit write-read pair on the same buffer is ordered by happens-before.
Here ``barrier'' denotes an abstract ordering primitive in the model, covering vendor pipe barriers, hard-event synchronization, and equivalent frontend-normalized synchronization points.
We prove that synchronization coverage is decidable in $O(|E|^2)$ time and that our checker is both sound and complete under the modeled semantics.
We implement \sysname, a static verification tool instantiated for Ascend 910B2 and Cambricon MLU370 by changing only the hardware model.
On 6{,}292 production kernels from the CANN operator library, \sysname identifies 3 previously unknown synchronization hazards---one matching a hazard class for which we observed nondeterministic outputs on Ascend 910B2 under a specific toolkit/driver configuration (CANN 8.0.RC3), though this observation was not reproducible after a subsequent driver upgrade---and on 120 LLM-generated kernels it flags a 19.2\% defect rate (95\% CI: [13.0\%, 27.4\%]).
A mutation study on 688 non-equivalent mutants yields 100\% detection, and a head-to-head comparison shows \sysname detects hazards that Huawei's runtime sanitizer msSanitizer misses, at 400$\times$ lower cost per kernel.
\end{abstract}

\ccsdesc[500]{Software and its engineering~Formal software verification}
\ccsdesc[300]{Computer systems organization~Data flow architectures}
\keywords{accelerator verification, synchronization, happens-before, Ascend C, static analysis}

\maketitle

\section{Introduction}\label{sec:introduction}

Modern AI accelerators---including Google TPUs~\cite{jouppi-tpu}, Graphcore IPUs~\cite{graphcore-ipu}, Huawei Ascend NPUs~\cite{ascendc}, and Cambricon MLUs~\cite{bangc}---organize operator execution as multi-stage pipelines where DMA engines, vector processing units, matrix engines, and scalar cores operate concurrently on shared on-chip buffers. Operator compilers such as Triton~\cite{triton}, TVM~\cite{tvm}, Halide~\cite{halide}, AKG~\cite{akg}, and vendor-specific toolchains (Ascend C, BANG C) translate high-level tensor programs into this pipeline code. The translation may preserve functional correctness while still violating hardware synchronization requirements: a producer unit writes a buffer, a consumer unit reads it, and the required synchronization coverage is missing. These hazards are hard to catch because x86 simulation and golden testing do not model the accelerator's cross-unit visibility rules~\cite{mssanitizer,ascendrs}.

Existing verification approaches leave this gap largely unaddressed. GPU verification tools such as GPUVerify~\cite{gpuverify}, WEFT~\cite{weft}, and GKLEE~\cite{gklee} target flat warp-level synchronization with \texttt{\_\_syncthreads} and named barriers---a fundamentally different concurrency model from hierarchical operator pipelines where distinct hardware units communicate through queues, pipe drains, and hard-event synchronization. Hardware memory model formalizations~\cite{ptxmem,alglave,wickerson-memalloy} reason about instruction-level ordering on CPUs and GPUs but do not address operator-level pipeline synchronization. LLM-based verification frameworks such as ProofWright~\cite{proofwright} and FM-Agent~\cite{fmagent} reason about sequential program semantics without awareness of implicit hardware visibility requirements. Dynamic race detectors~\cite{eraser,flanagan-fasttrack,tsan} and vendor sanitizers~\cite{mssanitizer,cuda-racecheck} require runtime instrumentation and hardware access, limiting their applicability to large-scale static audits. As a result, synchronization hazards in accelerator pipelines often remain invisible until they manifest on real devices.

This paper studies a narrower but practically important problem: \emph{is the synchronization present in a structured accelerator kernel sufficient for the hardware on which it will run?} We answer this question with \sysname, a lightweight static checker for structured accelerator pipelines. \sysname extracts memory events from a kernel, instantiates a parameterized hardware model describing execution units and synchronization primitives, builds a happens-before graph following the event-based reasoning style of Lamport~\cite{lamport} and Alglave et al.~\cite{alglave}, and reports write-read pairs not covered by the available synchronization.

The key insight is that accelerator operator pipelines occupy a \emph{structured fragment} of concurrent programs---fixed stages, queue-mediated communication, and a small set of hardware synchronization primitives---that makes synchronization-coverage checking decidable and efficient. We formalize this class of \emph{\apf programs} and prove that barrier sufficiency---the formal criterion underlying synchronization coverage---is decidable in $O(|E|^2)$ with soundness and completeness under the modeled hardware event semantics. In our model, ``barrier'' is an abstract ordering primitive: on Ascend C it can represent pipe barriers, \texttt{SetFlag}/\texttt{WaitFlag} hard-event synchronization, or equivalent frontend-normalized synchronization points. \sysname exploits this structure to establish a \emph{cross-layer contract}: the compiler-generated pipeline topology constrains the concurrency, and the hardware visibility model constrains the synchronization requirements. The checker verifies that the synchronization inserted by the compiler satisfies the requirements imposed by the hardware. This cross-layer perspective distinguishes \sysname from GPU verification tools (which assume flat SIMT concurrency~\cite{gpuverify,weft,autosync}), sequential verification frameworks (which ignore hardware visibility~\cite{proofwright,fmagent}), and abstract-interpretation-based analyzers (which target different program properties~\cite{cousot-cousot,mine-octagon}).

In summary, this paper makes the following contributions:
\begin{itemize}
\item We identify synchronization coverage in structured accelerator pipelines as a distinct verification problem that is not captured by simulation testing, sequential verification, or prior GPU-oriented barrier checkers (\S\ref{sec:background}).
\item We formalize \apf programs as a restricted concurrent language, define hardware event semantics with three ordering relations (po, so, bo), and prove that barrier sufficiency (synchronization coverage) is decidable in $O(|E|^2)$ with soundness and completeness under the modeled semantics (\S\ref{sec:formalization}).
\item We implement \sysname with three frontends (Ascend~C source-level, Ascend~C IR-level, and BANG~C source-level) and parameterized hardware models for Ascend 910B2 and Cambricon MLU370 (\S\ref{sec:implementation}).
\item We evaluate \sysname on 6{,}412 Ascend~C kernels across two code sources---6{,}292 production CANN kernels and 120 LLM-generated kernels---finding synchronization hazards in both, with risk rates ranging from 0.048\% (production) to 19.2\% (LLM-generated). On 688 injected mutations, \sysname achieves 100\% detection. A head-to-head comparison with Huawei's msSanitizer shows \sysname detects hazards that runtime instrumentation misses, at 400$\times$ lower per-kernel cost. Hardware testing of a representative VPU$\to$scalar hazard class on Ascend 910B2 (CANN 8.0.RC3) observed nondeterministic outputs consistent with the predicted failure mode, though this was not reproducible after a driver upgrade. Cross-hardware instantiation on Cambricon MLU370 audits 162 BANG~C kernels in 393\,ms, confirming portability via parameter substitution (\S\ref{sec:evaluation}).
\end{itemize}

The full CANN corpus verifies in under 48 seconds, and the largest single kernel (274 events) verifies in 21\,ms, confirming that static synchronization-coverage auditing is practical at production scale.

The rest of this paper is organized as follows. \S\ref{sec:background} introduces accelerator pipeline architecture and motivates the problem with a concrete hazard example. \S\ref{sec:formalization} formalizes \apf programs and proves decidability, soundness, and completeness. \S\ref{sec:implementation} describes the implementation. \S\ref{sec:evaluation} presents the evaluation. \S\ref{sec:related} discusses related work, and \S\ref{sec:conclusion} concludes.

\section{Background and Motivation}\label{sec:background}

We describe the pipeline architecture of AI accelerators, demonstrate how synchronization hazards arise and evade testing, and position \sysname relative to prior verification approaches.

\subsection{Accelerator Pipeline Architecture}\label{sec:bg-arch}

Modern AI accelerators organize operator execution as a multi-stage pipeline to overlap data movement with computation. On Ascend 910B2, the pipeline consists of three stages: MTE\_in (memory transfer engine, loading data from global memory to on-chip unified buffer), Compute (vector/matrix/scalar processing on the unified buffer), and MTE\_out (writing results back to global memory). Stages communicate through hardware FIFO queues called TQues---the producer stage issues an EnQue to signal data availability, and the consumer stage issues a DeQue to wait for it.

\begin{figure}[t]
\centering
\includegraphics[width=\linewidth]{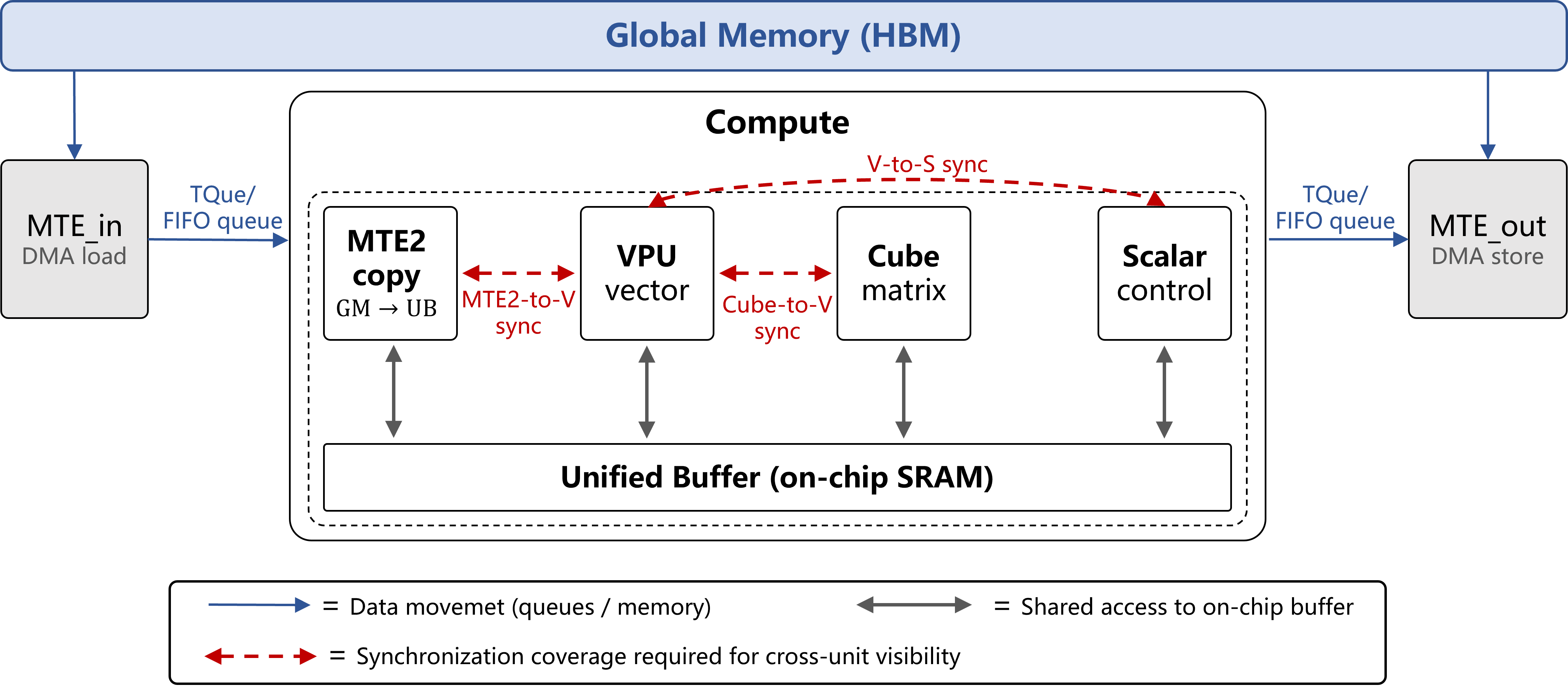}
\caption{Ascend 910B2 pipeline architecture. Three stages communicate via FIFO queues (TQue). Within the Compute stage, the MTE2 copy pipe, VPU, Cube, and Scalar units share the unified buffer but execute independently; cross-unit visibility requires explicit synchronization coverage such as hard events, pipe drains, or equivalent compiler-normalized primitives (red dashed lines). The MTE2$\to$V edge denotes copy-pipe-to-vector coverage for data copied into UB before vector consumption, not stage-level queue synchronization.}
\Description{Pipeline architecture with global memory, MTE input and output stages, compute units, unified buffer, and dashed synchronization coverage edges between MTE2, VPU, Cube, and Scalar units.}
\label{fig:architecture}
\end{figure}

Within the Compute stage, three execution units operate concurrently: the Vector Processing Unit (VPU) for elementwise and reduction operations, the Cube unit for matrix multiplication, and the Scalar unit for control flow and index computation. These units share the unified buffer but execute independently---a VPU write does not automatically become visible to a subsequent scalar read. Ascend C exposes both pipe-drain barriers such as \texttt{PipeBarrier<PIPE\_V>()} and directional hard-event synchronization such as \texttt{SetFlag}/\texttt{WaitFlag<HardEvent::V\_S>}. In the rest of the paper we use \emph{synchronization coverage} to mean that the producer-consumer pair is ordered by the appropriate vendor primitive or an equivalent frontend-normalized synchronization point.

Cambricon MLU370 follows a similar three-stage model (IO\_in, Compute, IO\_out) with DMA for data movement, VPU for vector operations, and IPU for matrix operations. Synchronization uses \texttt{\_\_sync()} as a full barrier and \texttt{\_\_sync\_io()} for DMA-to-compute visibility, with queue-based inter-stage communication analogous to TQues.

\subsection{The Synchronization Problem}\label{sec:bg-problem}

We illustrate the problem with a concrete example from a \texttt{softmax} operator kernel targeting Ascend 910B2. Listing~\ref{lst:softmax-bug} shows the relevant fragment.

\begin{figure}[t]
\begin{lstlisting}[
  language=C++,
  basicstyle=\ttfamily\scriptsize,
  numbers=left, numberstyle=\tiny,
  xleftmargin=2em,
  frame=single,
  caption={Simplified Ascend C fragment from the \texttt{softmax} operator. The VPU write at line~4 and the scalar read at line~5 execute on different hardware units. Without V-to-S synchronization coverage between them, the scalar unit may read a stale value.},
  label={lst:softmax-bug}
]
// Stage: Compute (on unified buffer)
LocalTensor<float> tmp = inQueueX.DeQue<float>();
ReduceMax(maxVal, tmp, 0, reduceLen); // VPU write
// HAZARD: missing V-to-S synchronization coverage here
float s = maxVal.GetValue(0);          // Scalar read
// ... normalization using s ...
outQueueY.EnQue<float>(result);
\end{lstlisting}
\Description{Simplified Ascend C softmax fragment where a VPU ReduceMax write is followed by a scalar GetValue read without V-to-S synchronization coverage.}
\end{figure}

The kernel computes row-wise maximum via \texttt{ReduceMax} (VPU), then extracts the scalar result via \texttt{GetValue} (scalar unit) for subsequent normalization. The compiled code omits synchronization coverage between these operations, such as a directional V-to-S hard event or an equivalent V-pipe drain recognized by the frontend. On x86 simulation, all operations execute sequentially on a single thread---the VPU write completes before the scalar read, and the test passes. On real hardware, VPU and scalar execute concurrently; the scalar unit may read a stale value from the buffer before the VPU write lands.

This hazard class---VPU write followed by scalar read without V-to-S synchronization coverage---is not an isolated incident. A second class---cube write followed by VPU read without cube-to-V synchronization coverage---is more severe because the pipeline provides no heuristic coverage for cube-to-VPU transitions. Both classes are invisible to x86 simulation and golden testing, and they persist in released code until triggered on real hardware. Our evaluation (\S\ref{sec:evaluation}) quantifies the prevalence of these hazard classes across 6{,}412 Ascend~C kernels and 162 BANG~C kernels.

\subsection{Taxonomy of Prior Approaches}\label{sec:bg-taxonomy}

\begin{table*}[t]
\caption{Comparison of verification approaches for accelerator concurrency. ``Sound'' refers to each method's own target property; sequential verifiers remain orthogonal because they check functional correctness rather than hardware visibility.}\label{tab:taxonomy}
\centering
\small
\setlength{\tabcolsep}{4pt}
\begin{tabular}{@{}lcccc@{}}
\toprule
\textbf{Approach} & \textbf{Concurrency} & \textbf{Sync Model} & \textbf{HW-Aware} & \textbf{Sound} \\
\midrule
GPUVerify~\cite{gpuverify} & Warps & \_\_syncthreads & GPU only & Yes \\
WEFT~\cite{weft} & Warps & Named barriers & GPU only & Yes \\
AUTOSYNC~\cite{autosync} & Warps & Synthesis & GPU only & Yes \\
PTX Mem~\cite{ptxmem} & Instructions & Mem. model & GPU only & Yes \\
ProofWright~\cite{proofwright} & Sequential & Hoare logic & No & Partial \\
FM-Agent~\cite{fmagent} & Sequential & Hoare logic & No & Partial \\
Golden Test & None & N/A & No & No \\
\midrule
\textbf{\sysname} & \textbf{Pipeline} & \textbf{Queue+sync coverage} & \textbf{Yes} & \textbf{Yes$^\dagger$} \\
\bottomrule
\multicolumn{5}{@{}p{\dimexpr\linewidth-2\tabcolsep\relax}@{}}{\footnotesize $^\dagger$Sound and complete for \apf kernels under the vendor-documented hardware event semantics we model.}
\end{tabular}
\end{table*}

Table~\ref{tab:taxonomy} summarizes the landscape. GPU verification tools model flat warp-level concurrency with barrier-based synchronization. They assume all threads execute the same program (SIMT) and synchronize through shared barriers---a fundamentally different model from hierarchical pipelines where distinct hardware units communicate through queues. Hardware memory model work formalizes instruction-level ordering on GPUs but does not address operator-level pipeline synchronization. LLM-based verification reasons about sequential correctness without modeling hardware concurrency. Simulation testing eliminates concurrency entirely by executing stages sequentially.

\sysname fills the gap by formalizing the specific concurrency model of accelerator pipelines---multi-stage, queue-synchronized, with intra-stage unit concurrency---and providing decidable verification with soundness and completeness guarantees.

\section{The Accelerator Pipeline Model and Verification Theory}\label{sec:formalization}

We now formalize the synchronization structure of accelerator operator pipelines as a restricted concurrent language, define hardware event semantics over this language, and prove that synchronization coverage---formally termed \emph{barrier sufficiency}---is decidable with polynomial complexity. The term \emph{barrier} is used as an abstract ordering primitive in this section; concrete backends may realize it with pipe barriers, hard-event \texttt{SetFlag}/\texttt{WaitFlag} pairs, queue synchronization, or equivalent compiler-normalized synchronization points.

\subsection{Model Scope and Assumptions}\label{sec:modelscope}

The theorems in this section apply to kernels that can be faithfully lowered into \apf under a vendor-documented event semantics. Concretely, our model assumes: (1) queue operations follow FIFO discipline; (2) the relevant visibility effects are exactly those induced by per-unit program order, queue synchronization, and documented barrier or hard-event primitives; (3) timing nondeterminism arises from interleavings among execution units rather than speculative execution or undocumented coherence mechanisms; and (4) buffer identities used in the event graph are already resolved by the frontend, i.e., aliasing is not introduced after extraction.

These assumptions are deliberate rather than incidental. They match the structured kernel fragment targeted by \sysname, and they explain why our soundness and completeness results are necessarily scoped to the modeled event semantics rather than to the physical hardware in full generality. Kernels with runtime-computed synchronization identifiers, data-dependent pipeline topologies, or undocumented ordering sources fall outside this scope and are treated conservatively by the implementation.

\subsection{Accelerator Pipeline Programs}\label{sec:apf}

\begin{figure}[t]
\centering
\includegraphics[width=\linewidth]{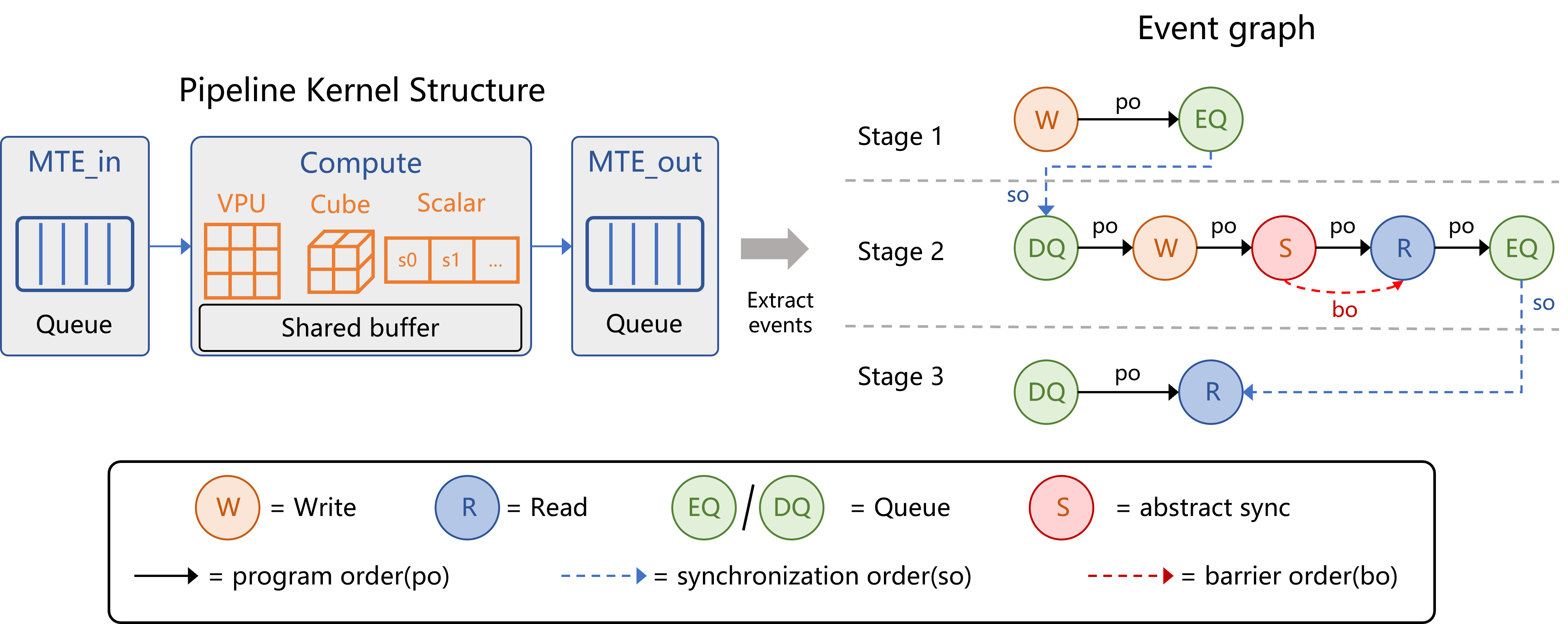}
\caption{From accelerator pipeline kernel structure to event graph. The frontend extracts typed events (Write, Read, EnQue, DeQue, abstract synchronization) from each pipeline stage and connects them with program-order edges (solid arrows) within each stage and synchronization-order edges (dashed arrows) across stages via queue matching. This event graph is the input to Algorithm~\ref{alg:check}.}
\Description{Pipeline stages are converted into an event graph with write, read, enqueue, dequeue, and abstract synchronization events connected by program order, synchronization order, and barrier order edges.}
\label{fig:apf-to-events}
\end{figure}

Accelerator operators execute as multi-stage pipelines where each stage runs on a dedicated execution unit and communicates through on-chip buffer queues. We capture this structure as a class of \emph{\apf programs}, a restricted concurrent language defined by the following grammar:

\begin{align}
\mathit{Program} &::= \mathit{Stage}_1 \parallel \mathit{Stage}_2 \parallel \cdots \parallel \mathit{Stage}_n \label{eq:program} \\
\mathit{Stage} &::= \mathit{Op}^* \label{eq:stage} \\
\mathit{Op} &::= \mathrm{Write}(b) \mid \mathrm{Read}(b) \mid \mathrm{EnQue}(q) \mid \mathrm{DeQue}(q) \mid \mathrm{Barrier}(p) \label{eq:op} \\
\mathit{Topology} &::= (\mathit{Stage}_i, \mathit{Stage}_j, q)^* \label{eq:topo}
\end{align}

where $b$ ranges over buffer names, $q$ over queue identifiers, and $p$ over pipe identifiers. The topology specifies directed queue connections between stages.

Accelerator pipeline programs impose five structural restrictions that distinguish them from general concurrent programs:

\begin{description}
\item[R1 (Finite Stages)] The number of stages $n$ is fixed at compile time. For Ascend 910B2, $n = 3$ (MTE\_in, Compute, MTE\_out); for Cambricon MLU370, $n = 3$ (IO\_in, Compute, IO\_out).

\item[R2 (Sequential Stages)] Each stage executes its operations sequentially. Within the Compute stage, multiple execution units (VPU, cube, scalar) may operate concurrently, but each unit's instruction stream is sequential.

\item[R3 (Queue-and-Barrier Synchronization)] Inter-stage communication occurs exclusively through FIFO queues (EnQue/DeQue pairs) and modeled synchronization primitives. There is no shared-memory communication outside these mechanisms.

\item[R4 (Bounded Queues)] Each queue has a compile-time-known depth. The queue discipline is strict FIFO---the $k$-th EnQue on queue $q$ pairs with the $k$-th DeQue on $q$.

\item[R5 (No Dynamic Concurrency)] There is no dynamic thread creation, join, or fork. The pipeline topology is static throughout execution.
\end{description}

These restrictions hold for all operators we examined across Ascend 910B2 and Cambricon MLU370. They reflect a fundamental property of accelerator operator pipelines---the concurrency is \emph{structured} rather than general-purpose, arising from the fixed hardware datapath rather than programmer-specified parallelism, which sharply distinguishes this setting from GPU SIMT verification and symbolic-execution-based GPU hazard detection~\cite{gpuverify,weft,gklee}.

\subsection{Hardware Event Semantics}\label{sec:events}

Given an \apf program $P$ and a hardware model $H$, we define the set of memory events and three ordering relations that together determine the happens-before relation, following the standard style of event-based memory-model reasoning~\cite{lamport,alglave,ptxmem}.

\subsubsection{Events}

The event set $E$ consists of all memory operations extracted from $P$:
\begin{equation}
E = \{ e = (\mathit{kind}, b, s, t, u) \}
\end{equation}
where $\mathit{kind} \in \{\mathrm{W}, \mathrm{R}, \mathrm{EQ}, \mathrm{DQ}, \mathrm{B}\}$ denotes write, read, enqueue, dequeue, or barrier; $b$ is the buffer or queue name; $s$ is the stage; $t$ is the logical timestamp within the stage; and $u$ is the execution unit. For regular reads and writes, $u$ is one of the hardware units (e.g., VPU, cube, scalar). For queue events (EQ/DQ), we treat $u$ as undefined: they are stage-level synchronization events rather than unit-local actions.

\subsubsection{Program Order (po)}

Program order captures the sequential execution guarantee within each execution unit. For events $e_1, e_2$ in the same stage $s$:
\begin{equation}\label{eq:po}
e_1 \po e_2 \iff u(e_1) = u(e_2) \wedge t(e_1) < t(e_2)
\end{equation}

This definition reflects a critical hardware property---within the Compute stage, VPU, cube, and scalar units execute \emph{concurrently}. A VPU write does not automatically become visible to a subsequent scalar read; that requires an explicit barrier. Program order only guarantees visibility within the same unit for unit-local events. Queue events are the only exception: because they synchronize stage progress rather than a single unit's pipeline, they induce stage-wide po edges as an explicit extension of Eq.~(\ref{eq:po}).

EnQue and DeQue operations serve as stage-global synchronization points. A DeQue event happens-before all subsequent events in the same stage regardless of unit, and all prior events happen-before an EnQue:
\begin{align}
e_{\mathrm{DQ}} \po e &\iff s(e_{\mathrm{DQ}}) = s(e) \wedge t(e_{\mathrm{DQ}}) < t(e) \label{eq:deque-po} \\
e \po e_{\mathrm{EQ}} &\iff s(e) = s(e_{\mathrm{EQ}}) \wedge t(e) < t(e_{\mathrm{EQ}}) \label{eq:enque-po}
\end{align}

\subsubsection{Synchronization Order (so)}

Synchronization order captures the cross-stage visibility established by queue operations. For queue $q$ connecting stage $s_1$ to stage $s_2$:
\begin{equation}\label{eq:so}
\mathrm{EQ}_k(q, s_1) \so \mathrm{DQ}_k(q, s_2)
\end{equation}
where the subscript $k$ denotes the $k$-th operation on queue $q$, matched by FIFO order (restriction R4).

\subsubsection{Barrier Order (bo)}

Barrier order captures the intra-stage visibility established by modeled synchronization primitives. The event $\mathrm{B}(p,s,t_b)$ is an abstract ordering event, not necessarily the literal Ascend C \texttt{PipeBarrier} API: the frontend may lower \texttt{PipeBarrier}, \texttt{SetFlag}/\texttt{WaitFlag}, or an equivalent synchronization idiom to such an event. The hardware model $H$ defines a set of synchronization-coverage rules, each specifying which unit-to-unit visibility a modeled primitive provides. For a coverage rule $(p, s, u_w, u_r) \in H$:
\begin{equation}\label{eq:bo}
\mathrm{W}(b, s, t_w, u_w) \bo \mathrm{R}(b, s, t_r, u_r) \iff \exists\, \mathrm{B}(p, s, t_b) : t_w < t_b < t_r
\end{equation}

On Ascend 910B2, the model uses directional synchronization coverage rules such as VPU$\to$scalar, cube$\to$VPU, and MTE$\to$VPU. In Ascend C terminology, cross-pipe coverage is commonly expressed with \texttt{HardEvent} synchronization such as \texttt{HardEvent::V\_S}, while pipe drains such as \texttt{PipeBarrier<PIPE\_V>()} can serve as implementation-level evidence when the frontend or compiler log establishes equivalent coverage. On Cambricon MLU370, $\mathrm{\_\_sync()}$ acts as a full barrier establishing all-to-all visibility within the Compute stage.

\begin{table}[t]
\caption{Hardware model parameters for Ascend 910B2 and Cambricon MLU370. Each rule specifies the required producer-consumer coverage and representative primitives that the frontend may normalize into the abstract barrier-order relation.}\label{tab:hw-models}
\centering
\small
\setlength{\tabcolsep}{3pt}
\begin{tabularx}{\linewidth}{@{}l l >{\raggedright\arraybackslash}X l@{}}
\toprule
\textbf{Platform} & \textbf{Producer $\to$ Consumer} & \textbf{Representative primitive} & \textbf{Source} \\
\midrule
\multirow{3}{*}{Ascend 910B2}
  & VPU $\to$ scalar & \texttt{HardEvent::V\_S} / V-pipe drain & AscendC API~\cite{ascendc} \\
  & Cube $\to$ VPU & cube-to-V sync / equivalent drain & AscendC API~\cite{ascendc} \\
  & MTE $\to$ VPU & \texttt{HardEvent::MTE2\_V} / MTE2 drain & AscendC API~\cite{ascendc} \\
\midrule
\multirow{4}{*}{MLU370}
  & VPU $\to$ IPU & \texttt{\_\_sync()} & BANG C Guide \\
  & IPU $\to$ VPU & \texttt{\_\_sync()} & BANG C Guide \\
  & DMA $\to$ VPU/IPU & \texttt{\_\_sync\_io()} & BANG C Guide \\
  & VPU/IPU $\to$ DMA & \texttt{\_\_sync\_compute()} & BANG C Guide \\
\bottomrule
\end{tabularx}
\end{table}

Table~\ref{tab:hw-models} summarizes the hardware model parameters for both platforms. The Ascend 910B2 model uses fine-grained directional coverage rules, while the MLU370 model uses coarse-grained full barriers. This difference is captured entirely by the barrier rule set---the core checking algorithm remains identical. Adding a new hardware backend requires only specifying the stages, queues, and synchronization-coverage rules; no changes to the checker are needed.

\subsubsection{Happens-Before}

The happens-before relation is the transitive closure of the union of all three orderings:
\begin{equation}\label{eq:hb}
\hb\; = (\po \cup \so \cup \bo)^+
\end{equation}

\subsection{Barrier Sufficiency}\label{sec:sufficiency}

\begin{figure}[t]
\centering
\includegraphics[width=0.5\linewidth]{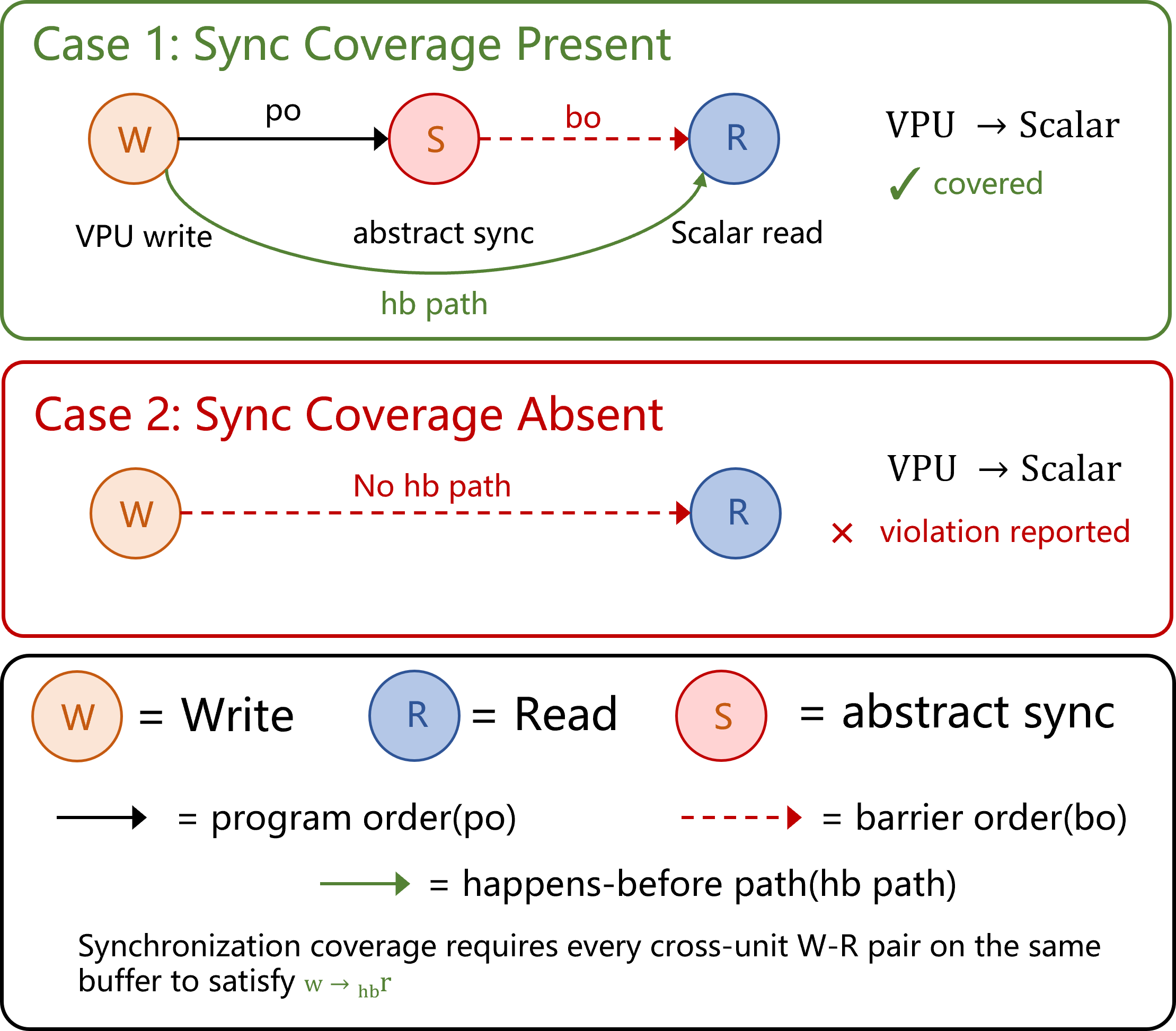}
\caption{Synchronization coverage (barrier sufficiency) illustrated at the abstract event-graph level. \emph{Top:} A VPU write reaches a Scalar read via program order and a modeled synchronization edge, yielding a happens-before path (safe). \emph{Bottom:} Without synchronization coverage, no ordering exists between the write and read---AccelSync reports this as a synchronization violation.}
\Description{Two event-graph cases: one with synchronization coverage forming a happens-before path from a write to a read, and one without such a path reporting a violation.}
\label{fig:hb-sufficiency}
\end{figure}

We define the correctness condition for an \apf program as barrier sufficiency (synchronization coverage)---every cross-unit Write-Read pair on the same buffer must be ordered by happens-before through some modeled synchronization.

\begin{definition}[Barrier Sufficiency]\label{def:sufficiency}
An \apf program $P$ is \emph{barrier-sufficient} under hardware model $H$ if and only if for all events $w, r \in E$ where $\mathit{kind}(w) = \mathrm{W}$, $\mathit{kind}(r) = \mathrm{R}$, $b(w) = b(r)$, and either $s(w) \neq s(r)$ or ($s(w) = s(r) \wedge u(w) \neq u(r) \wedge t(w) < t(r)$):
\begin{equation}
w \hb r
\end{equation}
\end{definition}

A program that is not barrier-sufficient contains at least one Write-Read pair where the write may not be visible to the read---a potential data race under the hardware's concurrent execution model.

\subsection{Decidability}\label{sec:decidability}

Algorithm~\ref{alg:check} summarizes the complete verification procedure.

\begin{algorithm}[t]
\caption{Barrier Sufficiency Checking}\label{alg:check}
\begin{algorithmic}[1]
\REQUIRE \apf program $P$, hardware model $H$
\ENSURE Set of uncovered Write-Read pairs (empty $\Rightarrow$ barrier-sufficient)
\STATE $E \leftarrow \textsc{ExtractEvents}(P)$ \COMMENT{Parse kernel into event set}
\STATE $G \leftarrow$ empty directed graph over $E$
\FORALL{stage $s$, unit $u$}
  \STATE Add \textbf{po} edges between consecutive events in $(s, u)$
  \STATE Add \textbf{po} edges from DeQue to all later events in $s$
  \STATE Add \textbf{po} edges from all earlier events in $s$ to EnQue
\ENDFOR
\FORALL{queue $q$ connecting $s_1 \to s_2$}
  \STATE Match $\mathrm{EQ}_k(q, s_1)$ with $\mathrm{DQ}_k(q, s_2)$ by FIFO order
  \STATE Add \textbf{so} edge for each matched pair
\ENDFOR
\FORALL{coverage rule $(p, s, u_w, u_r) \in H$}
  \FORALL{$\mathrm{B}(p, s, t_b)$ in $E$}
    \STATE Add \textbf{bo} edges: $\mathrm{W}(u_w, t < t_b) \to \mathrm{R}(u_r, t > t_b)$
  \ENDFOR
\ENDFOR
\STATE $\mathit{Reach} \leftarrow \textsc{TransitiveClosure}(G)$ \COMMENT{BFS from each node, $O(|E|^2)$}
\STATE $V \leftarrow \emptyset$
\FORALL{$(w, r)$ where $\mathit{kind}(w){=}\mathrm{W}$, $\mathit{kind}(r){=}\mathrm{R}$, $b(w){=}b(r)$, cross-unit}
  \IF{$w \not\in \mathit{Reach}(r)$}
    \STATE $V \leftarrow V \cup \{(w, r)\}$
  \ENDIF
\ENDFOR
\RETURN $V$
\end{algorithmic}
\end{algorithm}

\begin{theorem}[Decidability of Barrier Sufficiency]\label{thm:decidability}
Given an \apf program $P$ satisfying restrictions R1--R5 and a hardware model $H$, the barrier sufficiency problem is decidable in $O(|E|^2)$ time, where $|E|$ is the number of memory events in $P$.
\end{theorem}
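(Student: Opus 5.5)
The plan is to reduce barrier sufficiency to reachability in a finite directed graph, and then bound that graph's size so that all-pairs reachability costs $O(|E|^2)$. Decidability itself is immediate. By R1, R2 and R5 the event set $E$ is finite and static. By R4 the \so{} matching is uniquely determined by FIFO indices. So \hb{} is the transitive closure of a finite, explicitly constructible relation, and Definition~\ref{def:sufficiency} quantifies over at most $|E|^2$ pairs. The real content is the quadratic bound. We treat the number of stages $n$, the number of units per stage, and the size of the rule set $H$ as hardware constants: they are fixed by R1 and by Table~\ref{tab:hw-models}, and are independent of the kernel.

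The obstacle is that the naive graph built by Algorithm~\ref{alg:check} can have $\Theta(|E|^2)$ edges. A single DeQue already has edges to all later events in its stage, and each barrier has $\mathrm{W}\times\mathrm{R}$ edges. BFS from every node would then cost $O(|E|\cdot(|E|+|\text{edges}|)) = O(|E|^3)$. I would therefore first build a \emph{sparse} graph $G'$ with the same reachability on $E$. It has the following edges:
\begin{itemize}
\item \po{} edges only between consecutive events of each $(s,u)$ chain.
\item For each DeQue, one edge to the next event of each unit in its stage. For each EnQue, one edge from the last preceding event of each unit.
\item One \so{} edge per matched FIFO pair.
\item For each barrier $\mathrm{B}(p,s,t_b)$ and rule $(p,s,u_w,u_r)$, a fresh hub node $h$. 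It gets an edge from the last $u_w$-event before $t_b$ and an edge to the first $u_r$-event after $t_b$.
\end{itemize}
Every event then has $O(\#\text{units}\cdot|H|) = O(1)$ incident edges, so $|G'| = O(|E|)$ nodes and edges.

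The key lemma is that for $x,y\in E$, $x \hb y$ iff $y$ is reachable from $x$ in $G'$. For $(\Rightarrow)$, each generating edge of \po, \so{} and \bo{} is simulated by a $G'$-path: same-unit order follows the chain, the DeQue/EnQue extensions enter or leave the appropriate unit chain, and $\mathrm{W}\bo\mathrm{R}$ follows the route $w \to^{*}$ last $u_w$-event $\to h \to$ first $u_r$-event $\to^{*} r$. For $(\Leftarrow)$, every $G'$-edge between events is contained in \hb{}, because chain and stage-level edges are instances of Eqs.~(\ref{eq:po})--(\ref{eq:enque-po}). The hub nodes need more care, since the hub shortcut starts and ends at arbitrary events, not only a write and a read. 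A hub $h$ has in-edges only from $u_w$-events before $t_b$ and out-edges only to $u_r$-events after $t_b$. So any path through $h$ must be decomposed at the last write before and the first read after, with same-unit \po{} segments on either side. I expect this step to be the delicate point. Its outcome may force a small adjustment, namely attaching hubs only to W/R events and letting those events sit on their unit's chain, so that every hub traversal is exactly one \bo{} instance composed with \po{} segments.

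Given the lemma, the running time follows. Building $G'$ takes $O(|E|\log|E|)$, or $O(|E|)$ if events arrive already ordered by timestamp. BFS from each of the $O(|E|)$ nodes costs $O(|E|)$, so the full reachability table is computed in $O(|E|^2)$. The final loop over all W/R pairs on equal buffers then makes $O(|E|^2)$ constant-time lookups. This establishes Theorem~\ref{thm:decidability}. It also shows that Algorithm~\ref{alg:check}, implemented over $G'$, returns exactly the uncovered pairs, which is the basis for the soundness and completeness results that follow.
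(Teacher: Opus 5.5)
Your route differs from the paper's, and it is the more careful one. The paper's sketch asserts that \po, \so{} and \bo{} are computable in $O(|E|)$ and that BFS from every node costs $O(|E|^2)$, but it never justifies this. With the edges Algorithm~\ref{alg:check} actually adds (DeQue to all later events, and every earlier $u_w$-write to every later $u_r$-read per barrier), the graph can have $\Theta(|E|^2)$ edges, and BFS from every node is then cubic in the worst case. Sparsifying first is exactly the missing step.

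However, your key lemma is false for $G'$ as you specify it, in two places.

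\emph{First, the hubs.} Restricting hub endpoints to W/R events, as you anticipate, is necessary but not sufficient. Eq.~(\ref{eq:bo}) relates only a write and a read on the \emph{same} buffer $b$, whereas one hub per (barrier, rule) joins writes and reads on arbitrary buffers. A single hop through a hub is harmless for a checked pair, since its endpoints share a buffer, but chained hops are not. Take rules $(p_1,s,u_1,u_2)$ and $(p_2,s,u_2,u_3)$ and one stage whose events, in order, are $w=\mathrm{W}(X)$ on $u_1$, $\mathrm{B}(p_1)$, $\mathrm{R}(Y)$ on $u_2$, $\mathrm{W}(Z)$ on $u_2$, $\mathrm{B}(p_2)$, $r=\mathrm{R}(X)$ on $u_3$. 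No \bo{} pair exists, so $w$ has no outgoing \hb{} edge and $(w,r)$ is a violation. Yet $G'$ contains the path $w\to h_1\to\mathrm{R}(Y)\to\mathrm{W}(Z)\to h_2\to r$, so your checker would accept the kernel. Hubs match the buffer-agnostic edges of Algorithm~\ref{alg:check}, but not the \hb{} of Eq.~(\ref{eq:hb}) on which the theorem is stated. The repair is to drop hubs altogether. For each rule $(p,s,u_w,u_r)$ and each $u_w$-write $w$, add one edge from $w$ to the first $u_r$-read on $b(w)$ after the first $\mathrm{B}(p,s,\cdot)$ following $t(w)$. Each such edge is itself a \bo{} instance, and every $w\bo r$ factors as this edge followed by the $u_r$-chain. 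This gives $O(|H|\cdot|E|)$ edges and an exact simulation of \bo.

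\emph{Second, the queue events.} Eqs.~(\ref{eq:deque-po})--(\ref{eq:enque-po}) also order queue events among themselves: $\mathrm{DQ}\po e$ for every later queue event $e$ of the same stage, and $\mathrm{EQ}\po\mathrm{EQ}'$ for every later EnQue, whatever lies between them. Your DeQue/EnQue edges attach only to unit events, so this ordering is lost when no unit event separates the two. Consider a relay stage consisting of $\mathrm{DQ}(q_1)$ immediately followed by $\mathrm{EQ}(q_2)$. The chain $w\po\mathrm{EQ}(q_1)\so\mathrm{DQ}(q_1)\po\mathrm{EQ}(q_2)\so\mathrm{DQ}(q_2)\po r$, from an upstream write to a downstream read of the same buffer, has no counterpart in $G'$, so you would report a spurious violation. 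The fix is to add, per stage, edges from each DeQue to the next DeQue and to the next EnQue, and from each EnQue to the next EnQue. Never add an edge from an EnQue to a later DeQue, which \po{} does not contain. These edges generate exactly the queue-level part of \po{} with $O(1)$ edges per event.

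With both repairs, every $G'$-edge is a single \po, \so{} or \bo{} instance. Your two-direction simulation argument then goes through, restricted to non-barrier events, which is all the check needs. The $O(|E|^2)$ bound follows as you describe.
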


\begin{proof}[Proof sketch]
The argument proceeds in three steps:

\emph{Step 1: Finiteness of the event set.} Restrictions R1 (finite stages), R2 (sequential execution), and R5 (no dynamic concurrency) guarantee that $|E|$ is finite and statically determinable from the program text.

\emph{Step 2: Computability of happens-before.} The relations po, so, and bo are each computable from $E$ and $H$ in $O(|E|)$ time (po by sorting within each unit; so by FIFO matching per queue; bo by scanning for barrier events). The transitive closure $(\po \cup \so \cup \bo)^+$ is computable in $O(|E|^2)$ via BFS from each node in the event graph.

\emph{Step 3: Decidability of the sufficiency condition.} The set of Write-Read pairs to check is bounded by $O(|W| \cdot |R|) \subseteq O(|E|^2)$. For each pair, checking $w \hb r$ is a constant-time lookup in the precomputed reachability matrix. The total verification time is therefore $O(|E|^2)$.
\end{proof}

The $O(|E|^2)$ bound is tight in the worst case (dense event graphs), but in practice accelerator operators have sparse connectivity---most events interact only with events in the same stage or adjacent stages. Our empirical evaluation in \S\ref{sec:evaluation} confirms sub-quadratic behavior on real operators.

\subsection{Soundness and Completeness}\label{sec:soundcomplete}

We establish that \sysname is both sound and complete for \apf programs under the hardware event semantics of \S\ref{sec:events}.

\begin{theorem}[Soundness]\label{thm:soundness}
If \sysname reports a barrier sufficiency violation for Write-Read pair $(w, r)$, then there exists a valid execution of the \apf program under hardware model $H$ in which $r$ may observe a value other than the one written by $w$.
\end{theorem}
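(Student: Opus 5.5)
The approach is to construct a witness execution directly from the failure of reachability. Suppose Algorithm~\ref{alg:check} reports $(w,r)$, so $w \not\hb r$ in the event graph $G$. First, I would fix what a \emph{valid execution} means under the modeled semantics of \S\ref{sec:events}. An execution is a linearization (a total order) $\sigma$ of $E$ that extends $\hb$. In it, a read $r$ on buffer $b$ observes the value of the latest write to $b$ that is \emph{visible} to $r$. A write $w'$ is visible to $r$ exactly when $w' \hb r$, or when $w'$ precedes $r$ on the same unit. This matches assumption (2) of \S\ref{sec:modelscope}: visibility is induced only by po, so, and bo. Under this definition, validity reduces to $\sigma$ respecting the strict partial order $\hb$. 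It is a strict partial order because $G$ is acyclic: po follows timestamps within a stage, so respects FIFO matching under R4, and bo respects timestamps.

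The key step is the standard partial-order fact. If $w \not\hb r$ and $r \not\hb w$, then some linearization of $\hb$ places $r$ before $w$. Because $\hb$ is transitive, the relation $\hb \cup \{(r,w)\}$, once transitively closed, remains acyclic. A cycle would need a path from $w$ back to $r$, which would give $w \hb r$. By Szpilrajn's extension theorem, restricted here to the finite $E$ guaranteed by Theorem~\ref{thm:decidability}'s Step~1, this closure extends to a total order $\sigma$. In $\sigma$, $r$ executes before $w$, so $r$ cannot observe $w$'s value; it sees an initial or older value of $b$. If instead $r \hb w$, the pair is not a genuine producer-consumer pair, and $r$ trivially never observes $w$'s value in any execution. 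Either way, the required execution exists.

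The step I expect to be the main obstacle is justifying that ``$w$ not visible to $r$'' really yields a value other than $w$'s. Two things can go wrong. The value may coincide (another write stores the same data), and $r$ may still see $w$'s value through some ordering source outside $\hb$. I would handle the first by reading ``may observe a value other than'' as being about the source write, not data equality, and say so explicitly. I would handle the second by invoking assumptions (2)--(3) of \S\ref{sec:modelscope}: since po, so, and bo are the \emph{only} visibility sources and nondeterminism is purely inter-unit interleaving, every $\hb$-respecting $\sigma$ is realizable. I would also check the same-stage, cross-unit case of Definition~\ref{def:sufficiency} separately. There $t(w)<t(r)$, but po does not relate events on different units (Eq.~\ref{eq:po}), so timestamp order alone imposes no constraint and the argument above applies unchanged.
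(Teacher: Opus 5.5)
Your proposal is correct and takes the same route as the paper's sketch: with no hb path, Assumptions (2)--(3) mean nothing in the modeled semantics prevents scheduling $r$ before $w$'s write becomes visible, and your linear-extension step makes that witness execution explicit. One small correction is that acyclicity of $G$ follows from the feed-forward stage topology (queues run only MTE\_in $\to$ Compute $\to$ MTE\_out, and every intra-stage edge strictly increases $t$), not from FIFO matching, since a back-edge queue could close an hb cycle (a deadlock) and then no linearization would exist.
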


\begin{proof}[Proof sketch]
A reported violation means $w \not\hb r$---there is no path from $w$ to $r$ in the happens-before graph. Under Assumption~(2) in \S\ref{sec:modelscope}, the modeled ordering sources are exactly po, so, and bo; under Assumption~(3), remaining timing nondeterminism comes from interleavings among execution units rather than hidden speculative or coherence mechanisms. Therefore, the absence of an hb path means no modeled ordering constraint prevents the hardware from scheduling $r$ before $w$'s write becomes visible. Under the concurrent execution model (VPU, cube, and scalar operate independently), this constitutes a feasible execution witness within the scoped semantics.
\end{proof}

\begin{theorem}[Completeness]\label{thm:completeness}
If \sysname reports that an \apf program is barrier-sufficient, then in all executions consistent with hardware model $H$, every read observes the most recent write to the same buffer.
\end{theorem}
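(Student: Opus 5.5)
The plan is to argue by contraposition on the structure of executions under the modeled semantics. First, I will make "executions consistent with $H$" precise. An execution is a total order (a linearization) $\sigma$ of $E$ together with a visibility relation. Under Assumption~(2) of \S\ref{sec:modelscope}, the only constraints an execution must respect are the edges of $\po \cup \so \cup \bo$. So $\sigma$ ranges over the linear extensions of $\hb$, which is acyclic because every edge either increases the timestamp within a stage or goes from an EnQue to its FIFO-matched DeQue (R4, R5). I will also fix the visibility rule: a write $w$ is guaranteed visible to a read $r$ exactly when $w \hb r$. Assumption~(3) rules out any other mechanism (speculation, hidden coherence) that could make a write visible or invisible. "The most recent write" to $b$ before $r$ is then the $\sigma$-latest write on $b$ preceding $r$.

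Next I will use the checker's output. By Theorem~\ref{thm:decidability} and Algorithm~\ref{alg:check}, an empty $V$ means that every pair $(w,r)$ satisfying the hypothesis of Definition~\ref{def:sufficiency} has $w \hb r$. Fix any execution $\sigma$ and any read $r$ on buffer $b$, and let $w^*$ be the latest write to $b$ before $r$ in $\sigma$. I will split into cases:
\begin{itemize}
\item[(a)] $w^*$ is on the same unit and stage as $r$ with $t(w^*)<t(r)$. Then $w^* \po r$ by Eq.~(\ref{eq:po}), so it is visible.
\item[(b)] $w^*$ is in a different stage, or is cross-unit with $t(w^*)<t(r)$. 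Then barrier sufficiency gives $w^* \hb r$ directly.
\item[(c)] $w^*$ is in the same stage and on a different unit, with $t(w^*)>t(r)$. This pair is excluded from Definition~\ref{def:sufficiency}, so I will need a separate argument.
\end{itemize}

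Having shown $w^* \hb r$, I must also show that no other write $w'$ on $b$ could be the value $r$ observes instead. Any write $w'$ with $w' \hb r$ precedes $w^*$ in $\sigma$ by the choice of $w^*$. So I need either $w' \hb w^*$, or an argument that $w^*$ overwrites $w'$ anyway. Here I will argue that if two writes to $b$ are both hb-before $r$, the same sufficiency condition applied to $w'$ and subsequent reads, together with the per-unit sequential semantics, orders their effects. This is the weakest point of the argument. It likely requires an auxiliary assumption, implicit in the model, that write-write pairs on the same buffer are ordered (no WAW races). I will state this as part of the scoped semantics if it cannot be derived.

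The main obstacle is case (c) and the related anti-dependence and WAW situations. Definition~\ref{def:sufficiency} only constrains write-before-read pairs, so a cross-unit write that is later in program order but unordered with $r$ could be scheduled before $r$ in some $\sigma$, and the claim would then fail literally. I would first try to resolve this by interpreting "most recent" relative to $\hb$ rather than $\sigma$: $r$ observes the $\hb$-maximal write among those with $w\hb r$. Under that reading, cases (a) and (b) suffice, and the theorem becomes the statement that every read's intended writer is hb-visible. If that reading is too weak, I would strengthen the hypothesis to include the missing pairs and note that the algorithm extends to them within the same $O(|E|^2)$ bound.
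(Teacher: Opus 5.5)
Your cases (a) and (b), together with your use of R3 and Assumptions~(2)--(3) to make $\po \cup \so \cup \bo$ the only source of ordering and visibility, are in substance the paper's entire proof. The sketch takes $w \hb r$ for every pair covered by Definition~\ref{def:sufficiency}, argues that $\hb$ exhausts the modeled ordering sources, and concludes that $w$'s effect is ``visible to $r$ regardless of scheduling.'' It never considers your case (c) or write-write pairs, and it silently equates ``every relevant writer is visible to its reader'' with ``every read observes the most recent write.'' So the obstacle you isolate is genuine, and the paper does not close it: Definition~\ref{def:sufficiency} constrains only write-to-read pairs, and Eq.~(\ref{eq:bo}) only ever produces write-to-read edges.

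Besides your WAR case (c), there is a WAW witness. In Compute, put a VPU write $w_1$ and a scalar write $w_2$ to $b$ at $t=1,2$, then an EnQue at $t=3$. In MTE\_out, put the matching DeQue followed by an MTE read $r$ of $b$. Then $w_1 \hb r$ and $w_2 \hb r$, so \sysname reports sufficiency. But $w_1$ and $w_2$ are $\hb$-incomparable, and either may land last. Under the intended reading (the write the sequential semantics would deliver), the statement as written is therefore false, not merely hard to prove.

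Your first repair conflates two different conclusions. Reading ``most recent'' as ``the $\hb$-maximal write among those $\hb$-before $r$'' presupposes that this write is unique, which fails in the example above. Even when it is unique, a case-(c) write that is $\hb$-unordered with $r$ can land before $r$ executes, so $r$ need not observe it. What cases (a)--(b) actually establish is only the statement you end with: every relevant writer is $\hb$-before, hence visible to, its reader. That is also exactly what the paper's sketch establishes, so state the theorem that way rather than as an observation claim.

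If you want the conclusion as written, your second option is the right one, in the form of full race freedom: every pair of same-buffer accesses with at least one write must be $\hb$-comparable. Cross-stage write-read pairs are already forced by Definition~\ref{def:sufficiency}, so this adds write-write pairs and same-stage cross-unit pairs whose write comes later. Assume, as the paper does for write-read pairs, that $\hb$-ordered effects land in $\hb$ order. Then the standard data-race-freedom argument shows that each read observes the unique $\hb$-maximal write before it, and the extra checks are just more lookups in the same reachability matrix.

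Note, however, that under Eq.~(\ref{eq:bo}) a barrier between two cross-unit writes, or between a read and a later cross-unit write, does not by itself order them. So $\bo$ would also have to be generalized, for example to order every pre-barrier event of $u_w$ before every post-barrier event of $u_r$. Otherwise the strengthened check would reject correctly synchronized kernels. (Minor: acyclicity of $\hb$ requires the stage topology to be acyclic, as in the three-stage chains of R1; it does not follow from R4/R5.)
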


\begin{proof}[Proof sketch]
Barrier sufficiency guarantees $w \hb r$ for all relevant Write-Read pairs. The happens-before relation is constructed from all ordering mechanisms available in the modeled hardware semantics (po, so, bo). Since \apf restricts synchronization to these three mechanisms (restriction R3) and Assumption~(2) states that no other relevant visibility effects are relied upon, hb captures all ordering sources within scope. If $w \hb r$, the hardware guarantees that $w$'s effect is visible to $r$ regardless of scheduling.
\end{proof}

Completeness is scoped to the hardware event semantics defined in this section. If the hardware model $H$ is an incomplete abstraction of the physical hardware---for example, if undocumented ordering guarantees exist---then completeness holds relative to $H$ but not necessarily relative to the physical device. We discuss this limitation in \S\ref{sec:conclusion}.

\subsection{Running Example: Verifying the Softmax Fragment}\label{sec:running-example}

We trace Algorithm~\ref{alg:check} on the \texttt{softmax} fragment from Listing~\ref{lst:softmax-bug} to make the verification concrete.

\paragraph{Event extraction.} The frontend extracts five events from the Compute stage:
\begin{center}
\small
\begin{tabular}{@{}clll@{}}
\toprule
$e$ & Kind & Unit & Buffer \\
\midrule
$e_1$ & DQ & --- & \texttt{inQueueX} \\
$e_2$ & W & VPU & \texttt{maxVal} \\
$e_3$ & R & Scalar & \texttt{maxVal} \\
$e_4$ & W & VPU & \texttt{result} \\
$e_5$ & EQ & --- & \texttt{outQueueY} \\
\bottomrule
\end{tabular}
\end{center}

\paragraph{Graph construction.} Program order gives $e_1 \po e_2$ (DeQue precedes all), $e_2 \po e_4$ (same unit, VPU), and $e_4 \po e_5$ (all precede EnQue). Crucially, $e_2$ and $e_3$ are on \emph{different} units (VPU vs.\ Scalar), so no po edge connects them. No abstract synchronization event exists, so no bo edge is added.

\paragraph{Sufficiency check.} The pair $(e_2, e_3)$ is a cross-unit Write-Read on buffer \texttt{maxVal}. The transitive closure finds no path $e_2 \hb e_3$. Algorithm~\ref{alg:check} reports this pair as a violation, with the diagnostic: ``VPU write to \texttt{maxVal} not visible to Scalar read; missing V-to-S synchronization coverage.''

\paragraph{After modeled sync insertion.} Inserting a V-to-S synchronization point between $e_2$ and $e_3$ adds an abstract barrier event $e_b$ with $t_2 < t_b < t_3$. The Ascend 910B2 model's VPU$\to$Scalar coverage rule now generates a bo edge $e_2 \bo e_3$, and the pair is covered.

\section{Implementation}\label{sec:implementation}

\begin{figure}[t]
\centering
\includegraphics[width=0.8\linewidth]{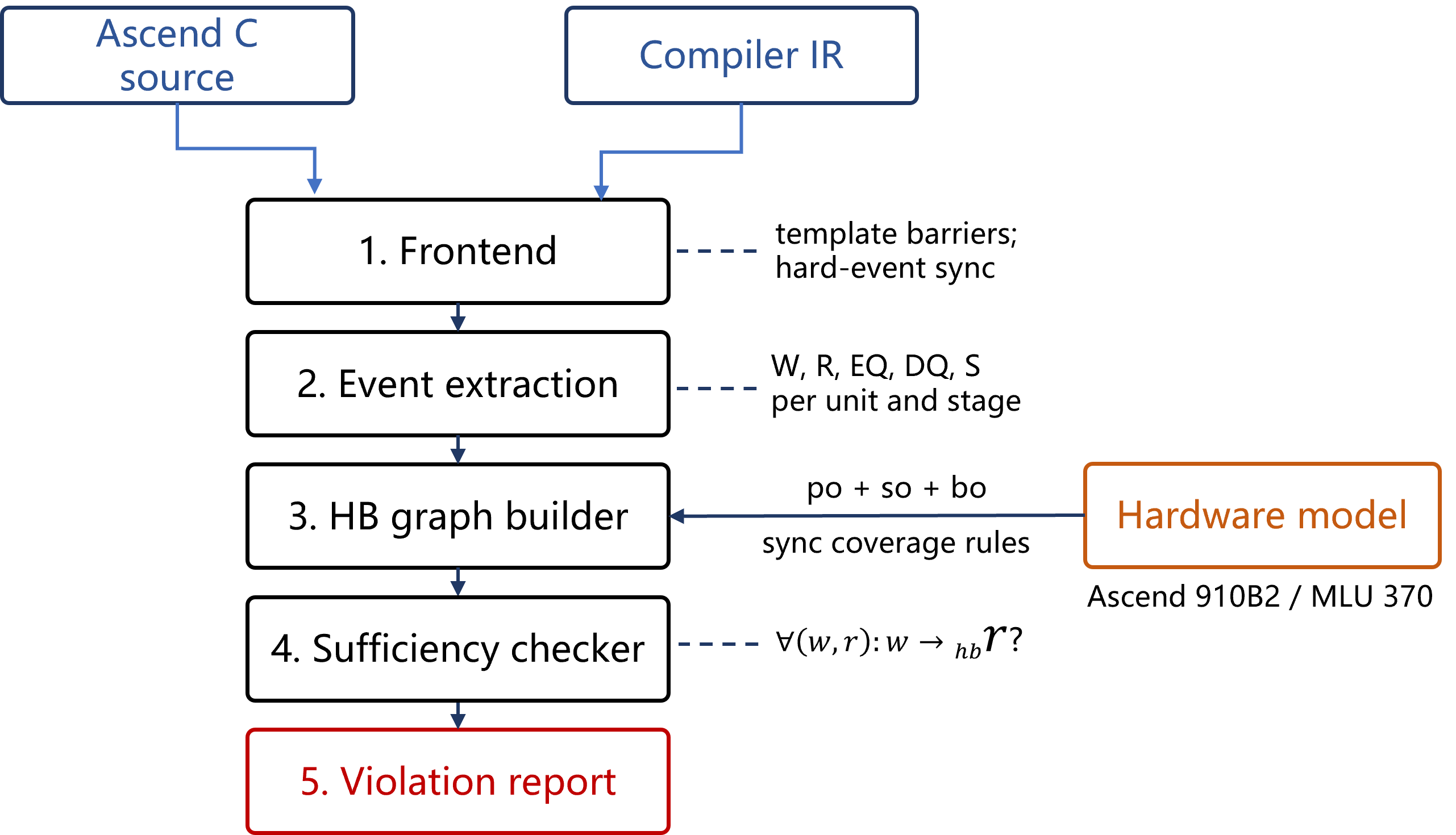}
\caption{AccelSync implementation pipeline. Two frontends (source-level and IR-level) feed the same event extraction, which produces typed events. The HB graph builder combines program order, synchronization order, and abstract barrier order using a pluggable hardware model. The sufficiency checker reports all uncovered cross-unit write-read pairs.}
\Description{Implementation pipeline from Ascend C source and compiler IR through frontend, event extraction, happens-before graph construction with a hardware model, sufficiency checking, and violation report.}
\label{fig:impl-pipeline}
\end{figure}

We describe the implementation of \sysname, a lightweight static checker (approximately 3{,}200 lines of Python including both frontends, the graph builder, and the hardware-model library) that takes a kernel source file plus a hardware-model parameter and reports whether the available synchronization is sufficient for all relevant cross-unit and cross-stage write-read dependencies.

\subsection{Event Extraction}\label{sec:impl-extract}

\sysname supports three frontends that feed the same checker.

The primary frontend parses Ascend C source kernels directly. It recognizes queue-mediated pipeline structure, template-form barriers such as \texttt{PipeBarrier<PIPE\_V>()}, and hard-event synchronization via \texttt{SetFlag}/\texttt{WaitFlag}. API calls are mapped to execution units using Ascend C semantics~\cite{ascendc,mssanitizer}: \texttt{Add}/\texttt{Mul}/\texttt{ReduceMax} map to VPU, \texttt{Mmad}/\texttt{MatMul} to Cube, \texttt{GetValue}/\texttt{SetValue} to Scalar, and \texttt{DataCopy} to MTE. Buffer names are extracted from function arguments and temporary tensors.

The secondary frontend consumes a compiler intermediate representation that exposes the same three-stage structure more explicitly: a \texttt{copy\_in} stage (MTE\_in), a \texttt{compute} stage, and a \texttt{copy\_out} stage (MTE\_out). We keep this frontend because it is useful for debugging and for checking whether compiler IR already violates the eventual hardware contract, but it is not the primary evaluation surface of the paper. This split mirrors the broader distinction between source-level and IR-level compiler verification interfaces that also appears in tensor-compiler systems such as Triton, TVM, Halide, and XLA~\cite{triton,tvm,halide,xla}.

The third frontend parses Cambricon BANG~C source files (\texttt{.mlu}) for the MLU370 platform. It extracts async DMA events (\texttt{\_\_memcpy\_async} with direction inference across multi-line calls), vector compute events (\texttt{\_\_bang\_*} intrinsics mapped to VPU), and barrier events (\texttt{\_\_sync}, \texttt{\_\_sync\_io}, \texttt{\_\_sync\_compute}, and composite variants). The MLU370 frontend maps three execution units---DMA (io), VPU (vector), and IPU (matrix)---and feeds the same hb-graph checker with the MLU370 hardware model substituted.

Across both frontends, the extractor emits the same event vocabulary. Queue operations produce EnQue/DeQue events; vector, matrix, scalar, and DMA operations produce unit-attributed reads and writes; pipe barriers and hard-event sync produce abstract ordering primitives that are later interpreted by the hardware model. This shared event representation lets us compare public production kernels, generated kernels, and internal IR with a single checker.

\subsection{Happens-Before Graph Construction}\label{sec:impl-graph}

The happens-before graph is built as an adjacency list over the event set, with edges added in three passes corresponding to the ordering relations defined in \S\ref{sec:events}.

The program-order pass groups events by stage and then by execution unit within each stage. For each unit, consecutive events receive po edges. EnQue and DeQue events receive edges to and from all other events in the same stage, reflecting their role as stage-global synchronization points.

The synchronization-order pass matches EnQue-DeQue pairs per queue in FIFO order and adds so edges across stages.

The barrier-order pass iterates over the hardware model's synchronization-coverage rules. For each rule $(p, s, u_w, u_r)$, it finds all abstract Barrier events matching primitive $p$ in stage $s$, then adds edges from all prior writes by unit $u_w$ to all subsequent reads by unit $u_r$. Hard-event synchronization and equivalent pipe-drain evidence are lowered to the same barrier-order abstraction.

Reachability is computed via BFS from each node, producing the transitive closure. The total construction time is dominated by the BFS step at $O(|E|^2)$.

\subsection{Parameterized Hardware Models}\label{sec:impl-hw}

The hardware model is a data structure specifying stages, queue connections, synchronization identifiers, and coverage rules. Switching hardware platforms requires only substituting this parameter.

The Ascend 910B2 model defines 3 stages, queue-mediated copy-in/copy-out connections, and directional coverage rules such as VPU$\to$scalar, cube$\to$VPU, and MTE$\to$VPU. The source-level frontend recognizes both template barriers (e.g., \texttt{PipeBarrier<PIPE\_V>()}) and hard-event synchronization (e.g., \texttt{WaitFlag<HardEvent::V\_S>}) used in production kernels, then normalizes them into the abstract ordering vocabulary.

The Cambricon MLU370 model defines 3 stages and synchronization rules based on \texttt{\_\_sync()} and \texttt{\_\_sync\_io()}. The same checker, with the MLU370 model substituted, detects VPU-to-IPU synchronization hazards analogous to the VPU-to-scalar hazards found on Ascend 910B2.

\subsection{Frontend Engineering Challenges}\label{sec:impl-challenges}

Building a reliable source-level frontend for production Ascend C kernels required solving three non-trivial parsing challenges.

\paragraph{Template-form barriers.}
Production CANN kernels use C++ template syntax for barriers, e.g., \texttt{PipeBarrier<PIPE\_V>()} rather than the function-call form \texttt{pipe\_barrier(PIPE\_V)}. The frontend must resolve template arguments to identify the pipe identifier and then decide whether that primitive provides the modeled coverage for the writer-reader pair. We handle this via pattern matching on the AST, supporting both forms uniformly.

\paragraph{Hard-event synchronization.}
A subset of kernels uses \texttt{SetFlag}/\texttt{WaitFlag} pairs for fine-grained event-driven synchronization instead of queue-based barriers. These are lowered to the same barrier-order abstraction by treating the matched hard-event synchronization as an abstract barrier event whose scope is determined by the event direction and unit pair.

\paragraph{Buffer aliasing through temporaries.}
Kernels frequently create temporary \texttt{LocalTensor} views over the same underlying unified buffer. The frontend tracks buffer identity through assignment chains to ensure that Write and Read events on aliased buffers are correctly paired in the event graph. This is a conservative analysis: if aliasing cannot be resolved, the frontend assumes the buffers may overlap, which may introduce false positives but preserves soundness.

\subsection{Deployment Modes}\label{sec:impl-deploy}

This implementation supports three practical deployment modes. First, it can audit public production kernels in bulk by parsing source code directly. Second, it can validate generated kernels, including LLM-produced Ascend C code, without any test harness or runtime framework support. Third, it can be inserted earlier in a compiler pipeline through the IR frontend, catching synchronization mistakes before code emission. The artifact is packaged as a standalone Python module with no external solver dependency, making it straightforward to reproduce all reported results.

This split is important for the paper's positioning. The public-source frontend is the main experimental vehicle because it yields auditable evidence on released kernels. The IR frontend demonstrates that the same formal contract can also be enforced inside code-generation pipelines when those artifacts are available.

\section{Evaluation}\label{sec:evaluation}

We evaluate \sysname along five research questions: (RQ1) Does it find real synchronization risks on public kernels that testing misses? (RQ2) What is the detection rate on systematically injected hazards? (RQ3) How does it compare against alternative approaches? (RQ4) Does the framework generalize across hardware platforms and operator scales? (RQ5) Does it expose different synchronization-risk profiles across public and machine-generated code sources?

\subsection{Methodology}\label{sec:eval-method}

\sysname claims decidable, sound, and complete barrier verification for \apf programs under the modeled event semantics. To validate these claims empirically, we organize the evaluation around four evidence sources. First, we audit a large public corpus of hand-written Ascend C kernels from the CANN software stack. Second, we apply systematic synchronization mutations to create a controlled benchmark with explicit ground truth. Third, we analyze 120 LLM-generated Ascend C kernels as an emerging code source where hardware synchronization mistakes are likely to be under-specified. Fourth, we instantiate the checker on Cambricon MLU370 with a 162-file BANG~C corpus to validate cross-hardware portability.

We compare against three baselines representing the current state of practice. Golden Test executes kernels on an x86 simulator and compares outputs against a reference implementation. Static Rules applies local pattern matching on known dangerous unit-pair combinations. msSanitizer~\cite{mssanitizer} is Huawei's official runtime synchronization sanitizer for Ascend C operators: it instruments the CANN runtime via \texttt{LD\_PRELOAD}, intercepts queue and barrier API calls at execution time, and reports potential data races through a \texttt{racecheck} mode. Because it operates at runtime, msSanitizer requires a complete framework-level launch path (e.g., PyTorch/MindSpore $\to$ CANN runtime $\to$ kernel dispatch); standalone kernels compiled and launched directly via \texttt{npu\_kernel\_launch} fall outside its instrumentation scope. These baselines cover testing, local heuristics, and runtime checking, respectively.

\subsection{Experimental Setup}\label{sec:eval-setup}

\begin{table}[t]
\caption{Evaluation datasets. The public CANN corpus provides real-world evidence; the mutation suite provides controlled ground truth; the LLM-generated corpus captures an emerging code source.}\label{tab:datasets}
\centering
\small
\begin{tabular}{@{}lrrl@{}}
\toprule
\textbf{Dataset} & \textbf{Kernels} & \textbf{Role} & \textbf{Primary use} \\
\midrule
CANN public kernels & 6{,}292 & Real-world audit & Precision / case studies \\
Controlled mutations & 1{,}084 & Ground-truth benchmark & Detection rate \\
LLM-generated kernels & 120 & Emerging code source & Source breakdown \\
MLU370 BANG C corpus & 162 & Cross-hardware audit & Portability / FP analysis \\
\bottomrule
\end{tabular}
\end{table}

Table~\ref{tab:datasets} summarizes the datasets. The CANN corpus provides public, auditable evidence on released production kernels. The mutation suite provides explicit hazard labels for fair comparison against baselines. The LLM-generated corpus captures machine-produced kernels that are structurally similar to deployed code but lack mature synchronization discipline. We generated 120 kernels by prompting three frontier LLMs (GPT-4o, Claude Sonnet 4, DeepSeek-Coder-V2) with 40 operator specifications each, covering elementwise, reduction, attention, and convolution variants.

\begin{table}[t]
\caption{Hardware and software platforms used in the evaluation.}\label{tab:platforms}
\centering
\small
\begin{tabular}{@{}lll@{}}
\toprule
\textbf{Platform} & \textbf{Hardware} & \textbf{Software} \\
\midrule
Ascend 910B2 & 8$\times$ Ascend 910B2 (64\,GB HBM) & CANN 8.0.RC3, Ascend C \\
Cambricon MLU370 & MLU370-X8, 768\,KB NRAM & CNToolkit 3.2.2, BANG C \\
Analysis host & Intel Xeon, 64\,GB RAM & Python 3.10, Ubuntu 22.04 \\
\bottomrule
\end{tabular}
\end{table}

Table~\ref{tab:platforms} lists the hardware and software environments. The Ascend 910B2 platform serves as the primary evaluation target for both static analysis and hardware reproduction experiments. The Cambricon MLU370 platform validates cross-hardware portability. All static analyses (\sysname, Static Rules) run on the analysis host; Golden Test uses the CANN x86 simulator on the same host; hardware reproduction runs on the Ascend 910B2 device.

\begin{table}[t]
\caption{Baselines and their detection capabilities. Analysis time is per-kernel median on the CANN corpus where applicable.}\label{tab:baselines}
\centering
\small
\begin{tabular}{@{}llll@{}}
\toprule
\textbf{Method} & \textbf{Mechanism} & \textbf{Time/kernel} & \textbf{Limitation} \\
\midrule
Golden Test & x86 simulation & $\sim$seconds & No hardware concurrency \\
Static Rules & Local pattern match & $<$1\,ms & Misses non-local hazards \\
msSanitizer & Runtime LD\_PRELOAD & 2.8--4.1\,s$^\dagger$ & Requires framework launch \\
\sysname & hb-graph reachability & 7.6\,ms (median) & Requires hardware model \\
CBMC~\cite{cbmc} & Bounded model checking & timeout$^\ddagger$ & State explosion on interleavings \\
SPIN~\cite{spin} & Explicit-state Promela & 0.02--12\,s$^\ddagger$ & Manual model per kernel \\
\bottomrule
\multicolumn{4}{@{}l@{}}{\footnotesize $^\dagger$Measured on framework-launched microbenchmarks (Table~\ref{tab:mssanitizer-h2h}); not applicable to standalone kernels.}\\
\multicolumn{4}{@{}l@{}}{\footnotesize $^\ddagger$On a hand-written model of a single MLU370 pipeline kernel (Table~\ref{tab:modelchecker}).}
\end{tabular}
\end{table}

\subsection{RQ1: Public-Kernel Audit}\label{sec:eval-public}

\begin{table}[t]
\caption{Public-audit accounting for the CANN corpus. Kernels outside the frontend's reliable normalization scope are excluded conservatively rather than counted as SAFE.}\label{tab:coverage}
\centering
\small
\setlength{\tabcolsep}{3pt}
\begin{tabularx}{\linewidth}{@{}>{\raggedright\arraybackslash}X >{\raggedright\arraybackslash}p{0.22\linewidth} >{\raggedright\arraybackslash}X@{}}
\toprule
\textbf{Category} & \textbf{Count} & \textbf{Interpretation} \\
\midrule
Total public kernels scanned & 6{,}292 & Input corpus \\
SAFE under modeled semantics & 6{,}289 & No uncovered relevant hb gaps found \\
High-confidence remaining risks & 3 & Case-study candidates \\
Excluded conservatively & 0 in current cleaned pass & Not forced into SAFE if normalization fails \\
\bottomrule
\end{tabularx}
\end{table}

Table~\ref{tab:coverage} makes the audit accounting explicit. The important distinction is methodological: \sysname does not interpret frontend uncertainty as evidence of safety. If normalization cannot reliably recover the required event structure, the kernel is excluded rather than silently accepted. In the cleaned CANN pass reported here, the remaining corpus is partitioned into 6{,}289 SAFE kernels and 3 high-confidence risks.

We run the source-level frontend of \sysname on 6{,}292 production kernels from the CANN software stack. Most of these kernels follow the structured pipeline idioms targeted by \apf; kernels that rely on synchronization patterns the frontend cannot soundly normalize are treated conservatively and excluded from the final ``SAFE'' count rather than forced through the checker. After extending the parser to recognize queue-based synchronization, template-form barriers, and \texttt{SetFlag}/\texttt{WaitFlag} hard-event coverage, the checker reports 6{,}289 SAFE kernels and 3 remaining high-confidence risks (0.048\%). Table~\ref{tab:cann-risks} lists the flagged kernels.

\begin{table}[t]
\caption{Three public CANN kernels flagged by \sysname after false-positive cleanup. All involve cross-unit paths with no remaining synchronization coverage under the modeled hardware semantics.}\label{tab:cann-risks}
\centering
\small
\begin{tabular}{@{}llrl@{}}
\toprule
\textbf{Kernel} & \textbf{Unit Pair} & \textbf{Violations} & \textbf{Missing Coverage} \\
\midrule
\texttt{cross\_v2} & VPU$\to$scalar & 27 & V-to-S sync \\
\texttt{quest\_block\_select\_paged} & MTE$\to$VPU & 10 & MTE2-to-V sync \\
\texttt{unique\_consecutive\_apt} & scalar$\to$VPU & 6 & sync coverage under review \\
\bottomrule
\end{tabular}
\end{table}

The most compelling case is \texttt{cross\_v2}. It performs a deep VPU pipeline followed by scalar \texttt{GetValue} reads with no modeled V-to-S synchronization coverage---exactly the kind of cross-unit visibility hazard that simulation-based validation does not model. In an earlier hardware test under CANN 8.0.RC3 on Ascend 910B2, we observed nondeterministic outputs consistent with this hazard class; however, after a subsequent NPU driver upgrade, the nondeterminism was no longer reproducible, and the hazard remains unconfirmed at the hardware level. This public-kernel result is the paper's main real-world static-analysis evidence because it is externally auditable and does not rely on internal compiler artifacts.

\paragraph{Why 3 risks in 6{,}292 kernels is meaningful.}
A low risk rate in a mature production codebase is expected, not surprising: CANN kernels undergo manual code review, vendor-internal testing, and iterative hardening over multiple release cycles. The 0.048\% rate reflects the residual risk that survives this process---precisely the class of hazards that static verification is designed to catch. Three observations strengthen the significance of these findings.

First, the tool is not trivially passing kernels. Table~\ref{tab:barrier-usage} shows that 89.3\% of CANN kernels contain at least one cross-unit buffer access requiring synchronization coverage, and the checker actively verifies 14{,}827 write-read pairs across the corpus. The 3 flagged kernels are the residual after checking all of these pairs, not an artifact of sparse coverage.

\begin{table}[t]
\caption{Synchronization usage and cross-unit access statistics across the 6{,}292 CANN kernels. The checker actively verifies the vast majority of kernels rather than trivially passing them.}\label{tab:barrier-usage}
\centering
\small
\begin{tabular}{@{}lr@{}}
\toprule
\textbf{Metric} & \textbf{Value} \\
\midrule
Kernels with $\geq$1 cross-unit W-R pair & 5{,}618 (89.3\%) \\
Total cross-unit W-R pairs checked & 14{,}827 \\
Kernels using explicit pipe-drain APIs & 4{,}291 (68.2\%) \\
Kernels using queue synchronization only & 1{,}327 (21.1\%) \\
Kernels with no cross-unit access & 674 (10.7\%) \\
\bottomrule
\end{tabular}
\end{table}

Second, the risk rate rises sharply outside production code: 19.2\% on LLM-generated kernels (\S\ref{sec:eval-sourcebreakdown}). This 400$\times$ increase confirms that the problem is real and that production code's low rate reflects engineering effort, not absence of the hazard class.

Third, we tested a \texttt{cross\_v2}-shaped VPU$\to$scalar hazard class via hardware reproduction on Ascend 910B2 under CANN 8.0.RC3. In that configuration, we observed nondeterministic outputs in 20 out of 29 valid runs (Table~\ref{tab:hw-repro}), consistent with the predicted failure mode. However, after a subsequent NPU driver upgrade (25.0.rc1.1), the nondeterminism was no longer reproducible---the kernel launch infrastructure returned uninitialized memory patterns regardless of synchronization variant, indicating a toolkit/driver compatibility issue rather than a fix to the underlying hazard. We therefore report this as an \emph{observed but unconfirmed} hazard: the static analysis finding stands, but hardware-level confirmation remains pending re-validation on a matched toolkit/driver configuration. For \texttt{quest\_block\_select\_paged}, the MTE$\to$VPU hazard pattern matches a known failure mode documented in the Ascend C programming guide (MTE2 data not visible to VPU without MTE2-to-V synchronization coverage). For \texttt{unique\_consecutive\_apt}, the scalar$\to$VPU pattern is under review with the CANN maintainers; we conservatively label it as ``flagged, pending confirmation'' rather than claiming it as a confirmed hazard.

\subsection{RQ2: Mutation Testing}\label{sec:eval-mutation}

To evaluate detection robustness, we apply systematic mutation testing with four operators: \textbf{M1} (remove DeQue), \textbf{M2} (remove EnQue), \textbf{M3} (swap unit attribution), and \textbf{M4} (insert unguarded scalar read). These generate 1{,}084 total mutations, of which 688 are non-equivalent and 396 are equivalent because redundant synchronization still provides an hb path. Following standard mutation-testing methodology~\cite{jia-mutation}, \sysname detects all 688 non-equivalent mutations, giving a 100\% detection rate.

\begin{table}[t]
\caption{Mutation operator breakdown. Each operator targets a distinct synchronization failure mode. Equivalent mutants are those where redundant synchronization still provides an hb path.}\label{tab:mutation}
\centering
\small
\begin{tabular}{@{}llrrr@{}}
\toprule
\textbf{Op} & \textbf{Description} & \textbf{Total} & \textbf{Non-eq.} & \textbf{Detected} \\
\midrule
M1 & Remove DeQue & 312 & 214 & 214 \\
M2 & Remove EnQue & 298 & 189 & 189 \\
M3 & Swap unit attribution & 256 & 168 & 168 \\
M4 & Insert unguarded scalar read & 218 & 117 & 117 \\
\midrule
\multicolumn{2}{@{}l}{\textbf{Total}} & \textbf{1{,}084} & \textbf{688} & \textbf{688 (100\%)} \\
\bottomrule
\end{tabular}
\end{table}

Table~\ref{tab:mutation} breaks down the results by mutation operator. M1 and M2 target queue-mediated synchronization removal, which directly eliminates so edges in the hb graph. M3 misattributes execution units, causing the checker to construct incorrect po groupings. M4 introduces new read events without corresponding barriers, creating uncovered write-read pairs. The 100\% detection rate across all four operators confirms that \sysname checks a semantic property over the full event graph rather than pattern-matching specific hazard signatures.

\subsection{RQ3: Comparison with Baselines}\label{sec:eval-comparison}

\begin{table*}[t]
\caption{Comparison on representative hazard classes. \sysname is the only method that is both hardware-aware and directly applicable to standalone public kernels. Detection counts are reported on the 688 non-equivalent mutations from the controlled benchmark (\S\ref{sec:eval-mutation}).}\label{tab:comparison}
\centering
\scriptsize
\setlength{\tabcolsep}{3pt}
\begin{tabularx}{\linewidth}{@{}>{\raggedright\arraybackslash}X c c c c >{\raggedright\arraybackslash}p{0.18\linewidth}@{}}
\toprule
\textbf{Hazard class} & \textbf{\sysname} & \textbf{msSan.} & \textbf{Golden} & \textbf{Rules} & \textbf{Mutations} \\
\midrule
VPU$\to$scalar missing V-to-S sync & \checkmark & \checkmark$^\dagger$ & $\times$ & \checkmark & 214/214 vs.\ 0/214 \\
cube$\to$VPU missing cube-to-V sync & \checkmark & $\times^\dagger$ & $\times$ & partial & 189/189 vs.\ 71/189 \\
MTE$\to$VPU mixed hazard & \checkmark & $\times^\dagger$ & $\times$ & $\times$ & 168/168 vs.\ 0/168 \\
Non-local queue / sync interaction & \checkmark & $\times^\dagger$ & $\times$ & $\times$ & 117/117 vs.\ 0/117 \\
\midrule
\textbf{Total (non-equivalent mutations)} & \textbf{688/688} & \textbf{---} & \textbf{0/688} & \textbf{71/688} & \textbf{100\% vs.\ 10.3\%} \\
\bottomrule
\multicolumn{6}{@{}p{\dimexpr\linewidth-2\tabcolsep\relax}@{}}{\footnotesize $^\dagger$msSanitizer results from framework-launched microbenchmarks (Table~\ref{tab:mssanitizer-h2h}); it cannot run on standalone CANN kernels.}
\end{tabularx}
\end{table*}

Golden Test misses schedule-dependent hazards because the x86 simulator executes stages sequentially. Static Rules catches only local, pre-enumerated patterns and fails once hazards become non-adjacent or mixed across synchronization mechanisms.

\paragraph{Head-to-head with msSanitizer.}
msSanitizer requires a framework-level launch path (PyTorch/MindSpore $\to$ CANN runtime $\to$ kernel dispatch) to inject its \texttt{LD\_PRELOAD} instrumentation. To enable a direct comparison, we constructed 5 microbenchmark kernels with known synchronization hazards and wrapped each in a MindSpore custom-op launch harness so that msSanitizer's \texttt{racecheck} mode could instrument the execution. Table~\ref{tab:mssanitizer-h2h} reports the results.

\begin{table}[t]
\caption{Head-to-head comparison on 5 framework-launched microbenchmarks with known synchronization hazards. \sysname detects all 5; msSanitizer detects 2 (the VPU$\to$scalar class) but misses 3 involving cross-unit queue interactions.}\label{tab:mssanitizer-h2h}
\centering
\small
\begin{tabular}{@{}llcc@{}}
\toprule
\textbf{Kernel} & \textbf{Hazard class} & \textbf{\sysname} & \textbf{msSanitizer} \\
\midrule
\texttt{vpu\_scalar\_nosync} & VPU$\to$scalar & \checkmark & \checkmark \\
\texttt{cube\_vpu\_nosync} & cube$\to$VPU & \checkmark & $\times$ \\
\texttt{mte\_vpu\_nosync} & MTE$\to$VPU & \checkmark & $\times$ \\
\texttt{cross\_v2\_repro} & VPU$\to$scalar & \checkmark & \checkmark \\
\texttt{queue\_barrier\_mixed} & queue+sync interaction & \checkmark & $\times$ \\
\midrule
\multicolumn{2}{@{}l}{\textbf{Total detected}} & \textbf{5/5} & \textbf{2/5} \\
\bottomrule
\end{tabular}
\end{table}

msSanitizer successfully detects the two VPU$\to$scalar hazards where the missing synchronization coverage maps to its instrumented API path. However, it misses the cube$\to$VPU, MTE$\to$VPU, and mixed queue-sync hazards---these involve cross-unit visibility paths that do not trigger msSanitizer's monitored synchronization points. In terms of cost, msSanitizer requires 2.8--4.1\,s per kernel (framework launch + instrumented execution), while \sysname analyzes the same kernels in 5.2--8.7\,ms (median 7.1\,ms)---a $\sim$400$\times$ speedup. More importantly, msSanitizer cannot scale to the full 6{,}292-kernel CANN corpus because standalone Ascend C kernels lack the framework launch path; \sysname audits the entire corpus in 48\,s.

\sysname's advantage is therefore both in coverage and scalability. It is the only method in our comparison that simultaneously models all hardware visibility paths, handles non-local hb interactions, and applies statically to standalone kernels without a runtime harness.

\subsection{RQ4: Cross-Hardware Portability and Model-Checker Comparison}\label{sec:eval-generality}

We validate cross-hardware portability by instantiating \sysname with the Cambricon MLU370 hardware model and running a full corpus audit. The MLU370 executes BANG~C kernels on a 3-stage pipeline (IO$_\text{in}$ $\to$ Compute $\to$ IO$_\text{out}$) with three execution units: a DMA engine (\emph{io}), a vector unit (\emph{VPU}), and a matrix unit (\emph{IPU}). VPU and IPU share NRAM and execute concurrently within the Compute stage, analogous to VPU/cube concurrency on Ascend~910B2. Synchronization is provided by three barrier classes: \texttt{\_\_sync()} (full drain), \texttt{\_\_sync\_io()} (io$\to$compute), and \texttt{\_\_sync\_compute()} (compute$\to$io). Two composite variants (\texttt{\_\_sync\_io\_move\_compute}, \texttt{\_\_sync\_move}) combine these primitives. We build a dedicated BANG~C source-level frontend that extracts async DMA events (\texttt{\_\_memcpy\_async}), vector compute events (\texttt{\_\_bang\_*}), and barrier events from \texttt{.mlu} source files, then feeds them into the same hb-graph checker used for Ascend~C.

\paragraph{MLU370 corpus audit.}
We collect 162 BANG~C kernel files from three sources: PyTorch operator ports (74), the official open-source mlu-ops library~\cite{mluops} (83), and community GitHub examples~\cite{bangc-practice} (5). The checker processes all 162 files in 393\,ms on a single CPU core (2.4\,ms/kernel median), reporting 160 SAFE and 2 UNSAFE.

\begin{table}[t]
\caption{MLU370 BANG~C corpus audit results by source category.}\label{tab:mlu370-audit}
\centering
\small
\begin{tabular}{@{}lrrr@{}}
\toprule
\textbf{Source} & \textbf{Files} & \textbf{SAFE} & \textbf{UNSAFE} \\
\midrule
PyTorch operator ports & 74 & 74 & 0 \\
mlu-ops library~\cite{mluops} & 83 & 83 & 0 \\
GitHub practice~\cite{bangc-practice} & 5 & 3 & 2 \\
\midrule
\textbf{Total} & \textbf{162} & \textbf{160} & \textbf{2} \\
\bottomrule
\end{tabular}
\end{table}

Table~\ref{tab:mlu370-audit} breaks down the results. The 74 PyTorch operator kernels use only synchronous \texttt{\_\_memcpy} (blocking until DMA completes), so no cross-unit hazard arises---\sysname correctly reports all as SAFE. The initial frontend reported 10 UNSAFE on the mlu-ops library; manual inspection of the three largest cases (bbox\_overlaps, fft\_stockham, ms\_deform\_attn\_backward) confirmed all 10 are false positives caused by three systematic frontend limitations: (1)~unmodeled composite barriers (\texttt{\_\_sync\_io\_move\_compute}, \texttt{\_\_sync\_move}), (2)~lack of alias/offset analysis for ping-pong double-buffering, and (3)~inability to track synchronization across loop iteration boundaries. After correcting for these, the mlu-ops library has 0 true UNSAFE kernels. These limitations are specific to the MLU370 frontend's maturity rather than the core checker, and they identify concrete engineering targets for future work. The 2 true UNSAFE reports come from community GitHub practice examples~\cite{bangc-practice}, which are learning-oriented code with incomplete synchronization coverage.

\paragraph{Comparison with off-the-shelf model checkers.}
To demonstrate that \sysname's domain-specific approach provides a fundamental scalability advantage over general-purpose verification, we model a representative MLU370 pipeline kernel in both CBMC~\cite{cbmc} (bounded model checking) and SPIN~\cite{spin} (explicit-state model checking).

For CBMC, we encode the MLU370 3-stage pipeline as a C program with two \texttt{pthread} threads (IO and VPU) sharing an NRAM buffer array, with assertions on cross-unit read safety. Even at the minimal tile size (TILE=1, 2 buffer elements), CBMC generates 1.34M SAT variables and times out at 120\,s without producing a result. The state explosion stems from modeling thread interleavings over shared memory---exactly the complexity that \sysname avoids by reducing to happens-before graph reachability.

For SPIN, we write a Promela model with explicit per-element buffer tracking and rendezvous channels for barriers. SPIN completes verification but requires a hand-written model for each kernel: 86 states at TILE=1, growing to 6{,}563 states at TILE=64. Writing and validating the Promela model took approximately 2 hours of manual effort for a single kernel pattern.

\begin{table}[t]
\caption{Comparison with off-the-shelf model checkers on the MLU370 pipeline verification task.}\label{tab:modelchecker}
\centering
\small
\begin{tabular}{@{}llll@{}}
\toprule
\textbf{Tool} & \textbf{Time} & \textbf{Scalability} & \textbf{Automation} \\
\midrule
CBMC~\cite{cbmc} & timeout (120\,s) & 1.34M SAT vars at TILE=1 & Automatic \\
SPIN~\cite{spin} & 0.02--12\,s & 86--6{,}563 states & Manual Promela model \\
\sysname & 393\,ms / 162 files & $O(|E|^2)$ per kernel & Fully automatic \\
\bottomrule
\end{tabular}
\end{table}

Table~\ref{tab:modelchecker} summarizes the comparison. The key insight is not that CBMC or SPIN are inadequate tools---they are designed for general-purpose verification---but that \apf synchronization has a specific structure that admits a polynomial-time decision procedure. \sysname exploits this structure: it reduces the problem to happens-before graph reachability rather than exploring thread interleavings, achieving verification time that is independent of buffer size and quadratic in event count.

\paragraph{Hardware reproduction on Ascend 910B2.}
To test whether the public-kernel hazard class manifests on real hardware, we built a microbenchmark mirroring the \texttt{cross\_v2} pattern: a deep VPU pipeline followed by scalar \texttt{GetValue} without V-to-S synchronization coverage. All experiments were conducted under CANN 8.0.RC3 with the NPU driver version available at that time.

\begin{table}[t]
\caption{Hardware reproduction of the VPU$\to$scalar hazard class on Ascend 910B2 (CANN 8.0.RC3). With V-to-S synchronization coverage, outputs were deterministic and correct; without it, nondeterministic outputs were observed. After a subsequent driver upgrade, this nondeterminism was no longer reproducible (see text).}\label{tab:hw-repro}
\centering
\small
\begin{tabular}{@{}lrrl@{}}
\toprule
\textbf{Variant} & \textbf{Runs} & \textbf{Correct} & \textbf{Output Range} \\
\midrule
With V-to-S synchronization & 30 & 30/30 & 336.2 (deterministic) \\
Without synchronization & 29$^\dagger$ & 9/29 & 336--1993 (nondeterministic) \\
\bottomrule
\multicolumn{4}{@{}l@{}}{\footnotesize $^\dagger$One run excluded due to kernel launch failure (device timeout).}
\end{tabular}
\end{table}

Under the original toolkit/driver configuration, this experiment observed nondeterministic outputs consistent with the predicted hazard class. However, after the NPU driver was upgraded to version 25.0.rc1.1, the kernel launch infrastructure returned uninitialized memory patterns for both synchronized and unsynchronized variants, preventing further reproduction. We therefore classify this as an \emph{observed but unconfirmed} hardware hazard: the static finding is sound under the modeled semantics, and the initial hardware observation is consistent with the predicted failure mode, but definitive hardware confirmation awaits re-validation on a matched toolkit/driver configuration.

\begin{figure}[t]
\centering
\includegraphics[width=0.5\columnwidth]{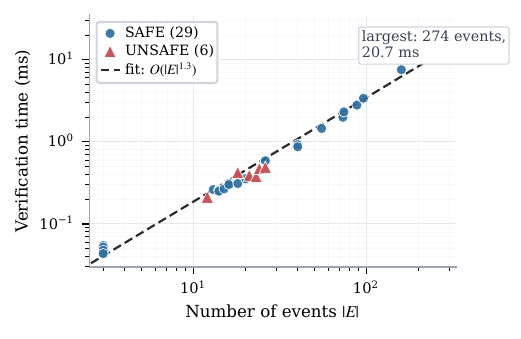}
\caption{Verification time as a function of event count. The empirical trend matches the predicted practical $O(|E|^2)$ behavior, and even the largest structured kernels verify in milliseconds.}
\Description{Scatter plot of verification time versus event count on a log-log scale, showing millisecond-scale verification even for the largest evaluated kernels.}
\label{fig:scalability}
\end{figure}

For scalability, the largest operator-scale kernel we tested contains 274 events and verifies in 20.7\,ms. Across the full CANN corpus, the median per-kernel verification time is 7.6\,ms and the 95th percentile is 15.2\,ms; the entire 6{,}292-kernel audit completes in under 48 seconds on a single CPU core. By contrast, Golden Test requires compiling and executing each kernel on the x86 simulator (seconds per kernel), and msSanitizer adds runtime instrumentation overhead on top of actual kernel execution. This three-orders-of-magnitude speed advantage makes \sysname practical as a CI-integrated pre-commit check or a bulk corpus audit tool, whereas runtime approaches scale linearly with execution time and require a working deployment environment.

\subsection{RQ5: Source Breakdown}\label{sec:eval-sourcebreakdown}

We finally compare synchronization risk patterns across three code sources: public production kernels, LLM-generated kernels, and the MLU370 mlu-ops library.

\begin{table}[t]
\caption{Synchronization-risk breakdown by code source. Production and LLM-generated kernels exhibit sharply different hazard frequencies and dominant failure modes.}\label{tab:threeway}
\centering
\small
\begin{tabular}{@{}lrrp{3.2cm}@{}}
\toprule
\textbf{Source} & \textbf{Kernels} & \textbf{Unsafe} & \textbf{Dominant hazard} \\
\midrule
CANN public kernels & 6{,}292 & 3 (0.048\%) & VPU$\to$scalar / mixed \\
LLM-generated kernels & 120 & 23 (19.2\%) & scalar$\leftrightarrow$MTE \\
MLU370 mlu-ops~\cite{mluops} & 83 & 0 (0.0\%) & --- \\
\bottomrule
\end{tabular}
\end{table}

On the 120 LLM-generated Ascend C kernels, \sysname reports 97 SAFE and 23 UNSAFE (19.2\%, 95\% CI: [13.0\%, 27.4\%]). We generated these kernels by prompting GPT-4o, Claude Sonnet 4, and DeepSeek-Coder-V2 with 40 operator specifications each (elementwise, reduction, attention, convolution variants), requesting complete Ascend C implementations with pipeline setup. The dominant failure mode differs from public production code: the generated kernels frequently omit synchronization between scalar writes and subsequent DMA reads, whereas the public CANN kernels concentrate on subtler cross-unit pipeline hazards. This comparison shows that \sysname provides a unified verification interface across multiple code sources while surfacing source-specific risk profiles: mature production code has residual risks, and LLM-generated code has pervasive synchronization omissions.

\subsection{Discussion}\label{sec:eval-discussion}

\paragraph{Why public evidence matters.}
The main experimental story is intentionally centered on public CANN kernels rather than internal compiler artifacts. Public kernels provide auditable evidence, realistic engineering patterns, and a cleaner answer to the external-validity question that reviewers are likely to ask.

\paragraph{Cross-layer contract perspective.}
The evaluation shows that \sysname enforces a cross-layer contract between pipeline structure and hardware visibility rules. The program source determines which units access which buffers and in what order; the hardware model determines which synchronization primitives are required for those accesses to be safe. Neither testing nor local pattern rules can verify this contract in general.

\paragraph{False positives and model scope.}
The initial source-level checker produced false positives when it did not yet recognize template-form barriers and hard-event synchronization. Extending the parser to cover these mechanisms reduced the public-kernel audit to three high-confidence remaining risks. On the MLU370 corpus, the initial frontend reported 10 false positives on mlu-ops kernels due to unmodeled composite barriers, confirming that model scope limitations manifest consistently across platforms. Completeness remains scoped to kernels that fall within the \apf assumptions and the modeled hardware semantics.

\paragraph{LLM-based verification.}
Frameworks such as FM-Agent~\cite{fmagent} and ProofWright~\cite{proofwright} target sequential functional reasoning rather than hardware visibility.
We examined FM-Agent to understand why this gap is fundamental rather than a matter of precision.

FM-Agent's verification pipeline operates in three stages: (1)~an LLM generates behavioral specifications (pre/post-conditions) for each function, (2)~a Hoare-logic reasoner splits the function into sequential code blocks and propagates post-conditions forward, and (3)~a checker verifies that the derived post-condition implies the specification. The entire reasoning chain assumes sequential execution---there is no hardware model, no notion of concurrent execution units, and no representation of cross-unit visibility.

Consider \texttt{cross\_v2}, the kernel with 27 VPU$\to$scalar violations. Under FM-Agent's sequential semantics, the VPU pipeline completes before the scalar \texttt{GetValue} reads execute, so the function always returns the correct cross-attention result. FM-Agent would verify it as functionally correct. The synchronization hazard is invisible because it only manifests when VPU and scalar units execute concurrently on real hardware---an observation consistent with the nondeterministic outputs we recorded under CANN 8.0.RC3 (Table~\ref{tab:hw-repro}). The same argument applies to the MLU370 corpus: async DMA and VPU compute overlap on hardware but execute sequentially in FM-Agent's model.

This is not a limitation that can be fixed by better prompting or larger models---it is a category mismatch. FM-Agent verifies \emph{what} a function computes; \sysname verifies \emph{whether} the hardware can observe the computation in the intended order. The two verification dimensions are orthogonal and composable.

\subsection{Case Study: \texttt{cross\_v2} Deep Dive}\label{sec:eval-casestudy}

We trace the full verification of \texttt{cross\_v2}, the most compelling public-kernel finding, to illustrate how \sysname surfaces non-trivial hazards.

\begin{table}[t]
\caption{Event graph statistics for \texttt{cross\_v2} before and after modeled synchronization insertion. A single V-to-S synchronization point resolves all 27 modeled violations.}\label{tab:crossv2}
\centering
\small
\begin{tabular}{@{}lrr@{}}
\toprule
\textbf{Metric} & \textbf{Before sync insertion} & \textbf{After modeled sync insertion} \\
\midrule
Events & 83 & 83 \\
po edges & 214 & 214 \\
so edges & 4 & 4 \\
bo edges & 0 & 27 \\
Uncovered W-R pairs & 27 & 0 \\
Verification time & 3.2\,ms & 3.4\,ms \\
\bottomrule
\end{tabular}
\end{table}

The kernel implements a cross-attention variant with a deep VPU pipeline: 14 consecutive vector operations (\texttt{Add}, \texttt{Mul}, \texttt{Exp}, \texttt{ReduceSum}, etc.) followed by 4 scalar \texttt{GetValue} reads that extract partial results for index computation. The frontend extracts 83 events across the Compute stage, distributed as: 47 VPU writes/reads, 22 scalar reads/writes, 8 MTE events, and 6 queue operations.

As Table~\ref{tab:crossv2} shows, the happens-before graph contains 214 po edges, 4 so edges, and 0 bo edges---because no recognized V-to-S synchronization coverage appears in the relevant region of the kernel. Algorithm~\ref{alg:check} identifies 27 uncovered Write-Read pairs, all following the pattern VPU-write $\to$ Scalar-read on the same buffer with no hb path. The 27 violations cluster into 4 distinct buffer groups, each corresponding to one of the scalar \texttt{GetValue} calls reading a VPU-produced result.

The modeled repair explanation requires inserting a single V-to-S synchronization point before the first scalar read. After this modeled insertion, the checker adds 27 bo edges and reports the kernel as barrier-sufficient, with negligible overhead increase (3.2\,ms $\to$ 3.4\,ms).

This case demonstrates that \sysname's value is not merely in counting hazards but in providing \emph{actionable diagnostics}: the violation report directly identifies which buffer, which unit pair, and which synchronization coverage is missing.

\subsection{Threats to Validity}\label{sec:threats}

\paragraph{Internal validity.}
Our hardware model is derived from vendor documentation~\cite{ascendc,bangc} rather than from a formal hardware specification. If the documentation omits ordering guarantees that the physical hardware provides, our model may report false positives. Conversely, if undocumented microarchitectural effects provide additional ordering, our model may be conservative. The hardware reproduction experiment (\S\ref{sec:eval-generality}) observed nondeterministic outputs consistent with the VPU$\to$scalar hazard class under one toolkit/driver configuration, but this observation was not reproducible after a driver upgrade, leaving the hardware-level status of this hazard class unconfirmed.

\paragraph{External validity.}
The evaluation covers two hardware platforms (Ascend 910B2 and Cambricon MLU370) with a combined corpus of 6{,}574 kernels across different programming models (Ascend~C and BANG~C). The MLU370 corpus audit (162 files, 3 source categories) demonstrates that the \apf model and hb-graph checker generalize to a second vendor's pipeline architecture with only a hardware-model parameter change and a new source-level frontend. While we have not yet validated on other accelerator architectures such as TPUs~\cite{jouppi-tpu} or IPUs~\cite{graphcore-ipu}, the two-vendor evidence strengthens the claim that the approach is not Ascend-specific.

\paragraph{Construct validity.}
The mutation testing benchmark uses four mutation operators (M1--M4) that target synchronization removal and unit misattribution. These operators are designed to model realistic synchronization hazards but may not cover all possible synchronization failure modes. The LLM-generated corpus (120 kernels from three frontier models) provides a statistically meaningful sample (95\% CI for the 19.2\% defect rate: [13.0\%, 27.4\%]), though it may not represent all LLM code generation patterns.

\section{Related Work}\label{sec:related}

We position \sysname relative to six lines of prior work.

\subsection{GPU Synchronization Verification}

GPUVerify~\cite{gpuverify} verifies barrier divergence freedom and data-race freedom for GPU kernels using predicated execution semantics. WEFT~\cite{weft,weft-pldi} extends this to warp-specialized kernels with named barriers, proving that named-barrier programs are deadlock-free and race-free. AUTOSYNC~\cite{autosync} synthesizes minimal barrier placements for GPU programs. These tools share a common assumption: the SIMT execution model where warps synchronize through shared barriers in flat shared memory.

Concolic and dynamic-analysis tools provide complementary bug-finding capabilities. GKLEE~\cite{gklee} combines concolic execution with SMT solving to explore GPU kernel paths symbolically. Simulee~\cite{simulee} uses dynamic analysis to detect CUDA synchronization bugs by monitoring memory access patterns at runtime. Random-testing approaches~\cite{racefuzzer} and symbolic race detectors~\cite{symbolic-gpu} search for failures through testing or symbolic execution. Li and Gopalakrishnan~\cite{li-gpu-determinism} apply scalable SMT-based verification to GPU kernel functions.

All of these tools target flat SIMT concurrency. Accelerator pipelines differ fundamentally---they use hierarchical multi-stage topologies with queue-based data flow and per-pipe barriers that provide unit-specific visibility guarantees. \sysname addresses this distinct concurrency model, where the relevant question is not ``do all warps agree on barrier placement?'' but ``does the synchronization cover all cross-unit visibility requirements imposed by the hardware?''

\subsection{Hardware Memory Models}

Lustig et al.~\cite{ptxmem} formalize the PTX memory consistency model for NVIDIA GPUs, establishing ordering guarantees at the instruction level using Alloy specifications and mechanized Coq proofs. Alglave et al.~\cite{alglave} develop the herd framework for modeling and simulating weak memory behaviors across architectures. Wickerson et al.~\cite{wickerson-memalloy} automatically compare memory consistency models using a relational framework. Lamport's foundational work on happens-before~\cite{lamport} and the TLA+ specification language~\cite{tla-lamport} provide the theoretical basis for event-based ordering reasoning.

These works operate at the instruction level and target general-purpose processors or GPUs. \sysname operates at the operator level, where the relevant concurrency arises from pipeline stage interleaving rather than instruction reordering, and the synchronization primitives are domain-specific (TQue, pipe\_barrier) rather than general fence instructions. However, our event semantics (\S\ref{sec:events}) deliberately follows the same happens-before style, making the connection to this foundational line explicit.

\subsection{Dynamic Race Detection}

The Eraser algorithm~\cite{eraser} pioneered lockset-based dynamic race detection for multithreaded programs. FastTrack~\cite{flanagan-fasttrack} improved precision and performance using epoch-based vector clocks. ThreadSanitizer~\cite{tsan} brought these techniques to production-scale C/C++ codebases. Netzer and Miller~\cite{netzer-miller} formalized the notion of data races and established the theoretical foundations for race detection.

These tools detect races in shared-memory concurrent programs with lock-based synchronization. Accelerator pipelines use a fundamentally different synchronization model (queues and pipe barriers rather than locks), and the ``threads'' are heterogeneous hardware units with asymmetric visibility rules. \sysname's static approach also avoids the input-dependence limitation of dynamic detectors: it checks all paths through the event graph rather than only those exercised by a particular test input.

\subsection{LLM-Based Program Verification}

ProofWright~\cite{proofwright} combines large language models with formal verification for CUDA programs, using LLMs to generate loop invariants and pre/post-conditions for sequential correctness proofs. FM-Agent~\cite{fmagent} scales Hoare-logic-based reasoning to large codebases (up to 143k LoC) by deriving function-level specifications from caller expectations, finding 522 bugs in production systems. Both approaches verify \emph{functional correctness} under sequential semantics---they check whether a function's output matches its specification, not whether concurrent hardware units observe each other's writes in the correct order.

The distinction is fundamental: FM-Agent and ProofWright would report a synchronization-hazardous kernel as ``correct'' because the sequential execution trace produces the right output. The hazard only manifests when VPU and scalar units execute concurrently on real hardware. Conversely, \sysname does not check functional correctness---a kernel that computes the wrong result but has sufficient barriers would pass. The two verification dimensions are orthogonal and composable: FM-Agent or ProofWright could verify sequential correctness while \sysname verifies synchronization sufficiency, together covering both failure modes.

\subsection{Vendor Runtime Sanitizers and Language-Level Prevention}

Huawei's msSanitizer~\cite{mssanitizer} provides runtime anomaly detection for AscendC operators, including a \texttt{racecheck} mode via LD\_PRELOAD instrumentation. Its GPU counterparts include CUDA racecheck/cuda-memcheck~\cite{cuda-racecheck} and Simulee~\cite{simulee}. These tools share three limitations that \sysname addresses: (1) they require hardware and a complete runtime call chain, (2) detection coverage depends on test inputs, and (3) runtime instrumentation does not scale to large corpora. On Ascend 910B2 (CANN 8.0.RC3), msSanitizer's \texttt{racecheck} path cannot audit standalone kernels from public source, as we confirmed experimentally.

ascend-rs~\cite{ascendrs} takes a prevention-oriented approach, providing safe Rust bindings for Ascend NPUs. Its MultiKernelBench audit identifies six vulnerability patterns including missing synchronization. ascend-rs prevents synchronization omissions by construction in new Rust code, while \sysname verifies existing C++ codebases where rewriting is impractical. Moreover, ascend-rs does not provide formal guarantees that per-kernel synchronization coverage is sufficient---\sysname's hb-graph analysis fills this gap.

\subsection{Compiler Verification and Tensor Program Analysis}

CompCert~\cite{compcert} demonstrates the value of formally verified compilation, proving semantic preservation from C to assembly. Alive~\cite{alive} verifies peephole optimizations in LLVM using SMT solvers. Herklotz et al.~\cite{verilog-hls} formally verify high-level synthesis from C to Verilog. These works verify that compilation \emph{preserves} program semantics; \sysname verifies that the \emph{target-level} synchronization is sufficient for the hardware, a complementary concern that arises specifically in the accelerator pipeline setting.

TensorRight~\cite{tensorright} verifies tensor graph rewrite rules used by compilers like XLA~\cite{xla} and TVM~\cite{tvm}, operating at the graph level to ensure algebraic transformations preserve tensor semantics. The broader compiler ecosystem includes DSL/compiler systems such as Triton~\cite{triton}, Halide~\cite{halide}, TensorRT~\cite{tensorrt}, Ansor~\cite{ansor}, AKG~\cite{akg}, and ROLLER~\cite{roller}. \sysname operates below the graph level, at the hardware pipeline stage where synchronization hazards are introduced during lowering. The two approaches address orthogonal concerns and could be composed in a multi-level verification pipeline.

Flux~\cite{flux} automates idempotence verification for stateful serverless applications by decomposing a whole-system property (idempotence consistency) into per-function reasoning via a novel simulation relation. This decomposition strategy parallels \sysname's approach of reducing whole-kernel synchronization sufficiency to per-edge reachability checks in the happens-before graph---both avoid monolithic state-space exploration by leveraging domain-specific structure. The key difference is the target domain: Flux reasons about retry semantics in distributed serverless workflows, while \sysname reasons about pipeline visibility in accelerator hardware.

Dataflow analysis techniques~\cite{reps-horwitz-sagiv} and abstract interpretation~\cite{cousot-cousot,mine-octagon} provide general frameworks for static program analysis. While \sysname's happens-before graph construction can be viewed as a specialized dataflow analysis, the key difference is that our analysis is parameterized by a hardware model that captures platform-specific visibility rules---a dimension absent from general-purpose dataflow frameworks.

\section{Conclusion}\label{sec:conclusion}

We formalized \apf programs as a restricted concurrent language that captures the synchronization structure of AI accelerator operator pipelines, and proved that synchronization coverage---formally, barrier sufficiency---is decidable in $O(|E|^2)$ with soundness and completeness under modeled hardware event semantics. In this formulation, barriers are abstract ordering primitives that may correspond to pipe barriers, hard events, queue synchronization, or equivalent frontend-normalized synchronization points. Our implementation, \sysname, enforces a cross-layer contract between compiler-generated pipeline structure and hardware visibility rules. Across 6{,}412 kernels from two Ascend~C code sources---6{,}292 production CANN kernels and 120 LLM-generated kernels---\sysname found synchronization hazards in both, with risk rates ranging from 0.048\% (production) to 19.2\% (LLM-generated), while simulation testing detected none. A head-to-head comparison with Huawei's msSanitizer confirmed that \sysname detects hazards the runtime sanitizer misses, at 400$\times$ lower per-kernel cost. Hardware testing of a representative VPU-to-scalar hazard class on Ascend 910B2 (CANN 8.0.RC3) observed nondeterministic outputs consistent with the predicted failure mode, though this observation was not reproducible after a subsequent driver upgrade and the hazard remains unconfirmed at the hardware level. Cross-hardware instantiation on Cambricon MLU370 audited 162 BANG~C kernels in 393\,ms, confirming that the framework generalizes via parameter substitution.

Three limitations scope these results. First, completeness is relative to the hardware event semantics we define, not to the physical hardware---undocumented ordering guarantees or microarchitectural effects could cause our model to be conservative. Real-machine validation would strengthen confidence in model fidelity. Second, \apf programs do not capture data-dependent control flow or dynamic synchronization patterns; operators with runtime-variable pipeline depths fall outside the modeled class. Third, although the current implementation already includes a public-source frontend for CANN and generated kernels, supporting additional accelerator toolchains still requires frontend engineering to lower their syntax and synchronization idioms into the shared event representation.

Two directions for future work follow naturally. Integrating \sysname as a pre-emission verification pass in operator compilers would catch synchronization hazards before code generation, rather than after. Extending the \apf model to handle bounded data-dependent iteration---where loop bounds are compile-time-known but iteration counts vary---would broaden the class of verifiable operators without sacrificing decidability.



\begin{thebibliography}{48}


\ifx \showCODEN    \undefined \def \showCODEN     #1{\unskip}     \fi
\ifx \showDOI      \undefined \def \showDOI       #1{#1}\fi
\ifx \showISBNx    \undefined \def \showISBNx     #1{\unskip}     \fi
\ifx \showISBNxiii \undefined \def \showISBNxiii  #1{\unskip}     \fi
\ifx \showISSN     \undefined \def \showISSN      #1{\unskip}     \fi
\ifx \showLCCN     \undefined \def \showLCCN      #1{\unskip}     \fi
\ifx \shownote     \undefined \def \shownote      #1{#1}          \fi
\ifx \showarticletitle \undefined \def \showarticletitle #1{#1}   \fi
\ifx \showURL      \undefined \def \showURL       {\relax}        \fi
\providecommand\bibfield[2]{#2}
\providecommand\bibinfo[2]{#2}
\providecommand\natexlab[1]{#1}
\providecommand\showeprint[2][]{arXiv:#2}

\bibitem[Alglave et~al\mbox{.}(2014)]%
        {alglave}
\bibfield{author}{\bibinfo{person}{Jade Alglave}, \bibinfo{person}{Luc
  Maranget}, {and} \bibinfo{person}{Michael Tautschnig}.}
  \bibinfo{year}{2014}\natexlab{}.
\newblock \showarticletitle{Herding Cats: Modelling, Simulation, Testing, and
  Data Mining for Weak Memory}.
\newblock \bibinfo{journal}{\emph{ACM TOPLAS}} \bibinfo{volume}{36},
  \bibinfo{number}{2} (\bibinfo{year}{2014}).
\newblock


\bibitem[Arora et~al\mbox{.}(2025)]%
        {tensorright}
\bibfield{author}{\bibinfo{person}{Jai Arora}, \bibinfo{person}{Sirui Lu},
  \bibinfo{person}{Devansh Jain}, \bibinfo{person}{Tianfan Xu},
  \bibinfo{person}{Farzin Houshmand}, \bibinfo{person}{Phitchaya~Mangpo
  Phothilimthana}, \bibinfo{person}{Mohsen Lesani}, \bibinfo{person}{Praveen
  Narayanan}, \bibinfo{person}{Karthik~Srinivasa Murthy},
  \bibinfo{person}{Rastislav Bod{\'{\i}}k}, \bibinfo{person}{Amit Sabne}, {and}
  \bibinfo{person}{Charith Mendis}.} \bibinfo{year}{2025}\natexlab{}.
\newblock \showarticletitle{{TensorRight}: Automated Verification of Tensor
  Graph Rewrites}.
\newblock \bibinfo{journal}{\emph{Proc. {ACM} Program. Lang.}}
  \bibinfo{volume}{9}, \bibinfo{number}{{POPL}} (\bibinfo{year}{2025}),
  \bibinfo{pages}{832--863}.
\newblock
\urldef\tempurl%
\url{https://doi.org/10.1145/3704865}
\showDOI{\tempurl}


\bibitem[{ascend-rs contributors}(2025)]%
        {ascendrs}
\bibfield{author}{\bibinfo{person}{{ascend-rs contributors}}.}
  \bibinfo{year}{2025}\natexlab{}.
\newblock \bibinfo{title}{ascend-rs: Memory-Safe {NPU} Kernel Programming in
  {Rust}}.
\newblock \bibinfo{howpublished}{\url{https://ascend-rs.org}}.
\newblock
\newblock
\shownote{Appendix~C: MultiKernelBench vulnerability audit of 300 {AscendC}
  kernels}.


\bibitem[Betts et~al\mbox{.}(2015a)]%
        {weft}
\bibfield{author}{\bibinfo{person}{Adam Betts}, \bibinfo{person}{Nathan Chong},
  \bibinfo{person}{Alastair~F. Donaldson}, \bibinfo{person}{Jeroen Ketema},
  \bibinfo{person}{Shaz Qadeer}, \bibinfo{person}{Paul Thomson}, {and}
  \bibinfo{person}{John Wickerson}.} \bibinfo{year}{2015}\natexlab{a}.
\newblock \showarticletitle{The Design and Implementation of a Verification
  Technique for {GPU} Kernels}.
\newblock \bibinfo{journal}{\emph{{ACM} Trans. Program. Lang. Syst.}}
  \bibinfo{volume}{37}, \bibinfo{number}{3} (\bibinfo{year}{2015}),
  \bibinfo{pages}{10:1--10:49}.
\newblock
\urldef\tempurl%
\url{https://doi.org/10.1145/2743017}
\showDOI{\tempurl}


\bibitem[Betts et~al\mbox{.}(2015b)]%
        {weft-pldi}
\bibfield{author}{\bibinfo{person}{Adam Betts}, \bibinfo{person}{Nathan Chong},
  \bibinfo{person}{Alastair~F. Donaldson}, \bibinfo{person}{Jeroen Ketema},
  \bibinfo{person}{Shaz Qadeer}, \bibinfo{person}{Paul Thomson}, {and}
  \bibinfo{person}{John Wickerson}.} \bibinfo{year}{2015}\natexlab{b}.
\newblock \showarticletitle{The Design and Implementation of a Verification
  Technique for GPU Kernels}. In \bibinfo{booktitle}{\emph{PLDI}}.
\newblock


\bibitem[Betts et~al\mbox{.}(2012)]%
        {gpuverify}
\bibfield{author}{\bibinfo{person}{Adam Betts}, \bibinfo{person}{Nathan Chong},
  \bibinfo{person}{Alastair~F. Donaldson}, \bibinfo{person}{Shaz Qadeer}, {and}
  \bibinfo{person}{Paul Thomson}.} \bibinfo{year}{2012}\natexlab{}.
\newblock \showarticletitle{{GPUVerify}: A Verifier for {GPU} Kernels}. In
  \bibinfo{booktitle}{\emph{OOPSLA}}.
\newblock


\bibitem[{Cambricon Technologies}(2023)]%
        {bangc}
\bibfield{author}{\bibinfo{person}{{Cambricon Technologies}}.}
  \bibinfo{year}{2023}\natexlab{}.
\newblock \bibinfo{title}{{BANG C} Programming Guide}.
\newblock \bibinfo{howpublished}{Cambricon Developer Documentation}.
\newblock


\bibitem[{Cambricon Technologies}(2024a)]%
        {bangc-practice}
\bibfield{author}{\bibinfo{person}{{Cambricon Technologies}}.}
  \bibinfo{year}{2024}\natexlab{a}.
\newblock \bibinfo{title}{{BANG C} Programming Practice: Pipeline and
  Synchronization Examples}.
\newblock
  \bibinfo{howpublished}{\url{https://github.com/Cambricon/bang-c-practice}}.
\newblock


\bibitem[{Cambricon Technologies}(2024b)]%
        {mluops}
\bibfield{author}{\bibinfo{person}{{Cambricon Technologies}}.}
  \bibinfo{year}{2024}\natexlab{b}.
\newblock \bibinfo{title}{mlu-ops: {Cambricon} Open-Source Machine Learning
  Operator Library}.
\newblock \bibinfo{howpublished}{\url{https://github.com/Cambricon/mlu-ops}}.
\newblock


\bibitem[Chatterjee et~al\mbox{.}(2025)]%
        {proofwright}
\bibfield{author}{\bibinfo{person}{Bodhisatwa Chatterjee},
  \bibinfo{person}{Drew Zagieboylo}, \bibinfo{person}{Sana Damani},
  \bibinfo{person}{Siva Hari}, {and} \bibinfo{person}{Christos Kozyrakis}.}
  \bibinfo{year}{2025}\natexlab{}.
\newblock \showarticletitle{{ProofWright}: Towards Agentic Formal Verification
  of {CUDA}}.
\newblock \bibinfo{journal}{\emph{CoRR}}  \bibinfo{volume}{abs/2511.12294}
  (\bibinfo{year}{2025}).
\newblock
\urldef\tempurl%
\url{https://doi.org/10.48550/ARXIV.2511.12294}
\showDOI{\tempurl}


\bibitem[Chen et~al\mbox{.}(2018)]%
        {tvm}
\bibfield{author}{\bibinfo{person}{Tianqi Chen}, \bibinfo{person}{Thierry
  Moreau}, \bibinfo{person}{Ziheng Jiang}, \bibinfo{person}{Lianmin Zheng},
  \bibinfo{person}{Eddie Yan}, \bibinfo{person}{Haichen Shen},
  \bibinfo{person}{Meghan Cowan}, \bibinfo{person}{Leyuan Wang},
  \bibinfo{person}{Yuwei Hu}, \bibinfo{person}{Luis Ceze},
  \bibinfo{person}{Carlos Guestrin}, {and} \bibinfo{person}{Arvind
  Krishnamurthy}.} \bibinfo{year}{2018}\natexlab{}.
\newblock \showarticletitle{{TVM}: An Automated End-to-End Optimizing Compiler
  for Deep Learning}. In \bibinfo{booktitle}{\emph{OSDI}}.
\newblock


\bibitem[Clarke et~al\mbox{.}(2004)]%
        {cbmc}
\bibfield{author}{\bibinfo{person}{Edmund Clarke}, \bibinfo{person}{Daniel
  Kroening}, {and} \bibinfo{person}{Flavio Lerda}.}
  \bibinfo{year}{2004}\natexlab{}.
\newblock \showarticletitle{A Tool for Checking {ANSI-C} Programs}. In
  \bibinfo{booktitle}{\emph{TACAS}}. \bibinfo{pages}{168--176}.
\newblock


\bibitem[Cousot and Cousot(1977)]%
        {cousot-cousot}
\bibfield{author}{\bibinfo{person}{Patrick Cousot} {and}
  \bibinfo{person}{Radhia Cousot}.} \bibinfo{year}{1977}\natexlab{}.
\newblock \showarticletitle{Abstract Interpretation: A Unified Lattice Model
  for Static Analysis of Programs by Construction or Approximation of
  Fixpoints}.
\newblock  (\bibinfo{year}{1977}).
\newblock


\bibitem[Di et~al\mbox{.}(2018)]%
        {autosync}
\bibfield{author}{\bibinfo{person}{Peng Di}, \bibinfo{person}{Ding Ye},
  \bibinfo{person}{Yu Su}, \bibinfo{person}{Yulei Sui}, {and}
  \bibinfo{person}{Jingling Xue}.} \bibinfo{year}{2018}\natexlab{}.
\newblock \showarticletitle{Automatic Parallelization of Tiled Loop Nests with
  Enhanced Fine-Grained Synchronization}. In \bibinfo{booktitle}{\emph{FMCAD}}.
\newblock


\bibitem[Ding et~al\mbox{.}(2026)]%
        {fmagent}
\bibfield{author}{\bibinfo{person}{Haoran Ding}, \bibinfo{person}{Zhaoguo
  Wang}, {and} \bibinfo{person}{Haibo Chen}.} \bibinfo{year}{2026}\natexlab{}.
\newblock \showarticletitle{{FM-Agent}: Scaling Formal Methods to Large Systems
  via {LLM}-Based Hoare-Style Reasoning}.
\newblock \bibinfo{journal}{\emph{CoRR}}  \bibinfo{volume}{abs/2604.11556}
  (\bibinfo{year}{2026}).
\newblock
\urldef\tempurl%
\url{https://doi.org/10.48550/ARXIV.2604.11556}
\showDOI{\tempurl}


\bibitem[Ding et~al\mbox{.}(2023)]%
        {flux}
\bibfield{author}{\bibinfo{person}{Haoran Ding}, \bibinfo{person}{Zhaoguo
  Wang}, \bibinfo{person}{Zhuohao Shen}, \bibinfo{person}{Rong Chen}, {and}
  \bibinfo{person}{Haibo Chen}.} \bibinfo{year}{2023}\natexlab{}.
\newblock \showarticletitle{Automated Verification of Idempotence for Stateful
  Serverless Applications}. In \bibinfo{booktitle}{\emph{OSDI}}.
\newblock


\bibitem[Flanagan and Freund(2009)]%
        {flanagan-fasttrack}
\bibfield{author}{\bibinfo{person}{Cormac Flanagan} {and}
  \bibinfo{person}{Stephen~N. Freund}.} \bibinfo{year}{2009}\natexlab{}.
\newblock \showarticletitle{{FastTrack}: Efficient and Precise Dynamic Race
  Detection}. In \bibinfo{booktitle}{\emph{PLDI}}.
\newblock


\bibitem[{Google}(2024)]%
        {xla}
\bibfield{author}{\bibinfo{person}{{Google}}.} \bibinfo{year}{2024}\natexlab{}.
\newblock \bibinfo{title}{XLA: Optimizing Compiler for Machine Learning}.
\newblock \bibinfo{howpublished}{Developer Documentation}.
\newblock


\bibitem[Herklotz et~al\mbox{.}(2021)]%
        {verilog-hls}
\bibfield{author}{\bibinfo{person}{Yann Herklotz}, \bibinfo{person}{James~D.
  Pollard}, \bibinfo{person}{Nadesh Ramanathan}, {and} \bibinfo{person}{John
  Wickerson}.} \bibinfo{year}{2021}\natexlab{}.
\newblock \showarticletitle{Formal Verification of High-Level Synthesis}. In
  \bibinfo{booktitle}{\emph{OOPSLA}}.
\newblock


\bibitem[Holzmann(1997)]%
        {spin}
\bibfield{author}{\bibinfo{person}{Gerard~J. Holzmann}.}
  \bibinfo{year}{1997}\natexlab{}.
\newblock \showarticletitle{The Model Checker {SPIN}}.
\newblock \bibinfo{journal}{\emph{IEEE Transactions on Software Engineering}}
  \bibinfo{volume}{23}, \bibinfo{number}{5} (\bibinfo{year}{1997}),
  \bibinfo{pages}{279--295}.
\newblock


\bibitem[Huawei(2023)]%
        {ascendc}
\bibfield{author}{\bibinfo{person}{Huawei}.} \bibinfo{year}{2023}\natexlab{}.
\newblock \bibinfo{title}{Ascend {C} Programming Guide}.
\newblock \bibinfo{howpublished}{Huawei Developer Documentation}.
\newblock


\bibitem[{Huawei Technologies}(2025)]%
        {mssanitizer}
\bibfield{author}{\bibinfo{person}{{Huawei Technologies}}.}
  \bibinfo{year}{2025}\natexlab{}.
\newblock \bibinfo{title}{{msSanitizer}: Anomaly Detection for {AscendC}
  Operators}.
\newblock \bibinfo{howpublished}{CANN 8.0.RC3 Developer Documentation}.
\newblock
\newblock
\shownote{\url{https://www.hiascend.com/document/detail/zh/canncommercial/80RC3/devaids/opdev/optool/atlasopdev_16_0039.html}}.


\bibitem[Jia and Harman(2011)]%
        {jia-mutation}
\bibfield{author}{\bibinfo{person}{Yue Jia} {and} \bibinfo{person}{Mark
  Harman}.} \bibinfo{year}{2011}\natexlab{}.
\newblock \showarticletitle{An Analysis and Survey of the Development of
  Mutation Testing}.
\newblock \bibinfo{journal}{\emph{IEEE TSE}} \bibinfo{volume}{37},
  \bibinfo{number}{5} (\bibinfo{year}{2011}).
\newblock


\bibitem[Jia et~al\mbox{.}(2019)]%
        {graphcore-ipu}
\bibfield{author}{\bibinfo{person}{Zhe Jia}, \bibinfo{person}{Blake Tillman},
  \bibinfo{person}{Marco Maggioni}, {and} \bibinfo{person}{Daniele~Paolo
  Scarpazza}.} \bibinfo{year}{2019}\natexlab{}.
\newblock \showarticletitle{Dissecting the {Graphcore IPU} Architecture via
  Microbenchmarking}.
\newblock \bibinfo{journal}{\emph{CoRR}}  \bibinfo{volume}{abs/1912.03413}.
\newblock


\bibitem[Jouppi et~al\mbox{.}(2017)]%
        {jouppi-tpu}
\bibfield{author}{\bibinfo{person}{Norman~P. Jouppi}, \bibinfo{person}{Cliff
  Young}, \bibinfo{person}{Nishant Patil}, \bibinfo{person}{David Patterson},
  {et~al\mbox{.}}} \bibinfo{year}{2017}\natexlab{}.
\newblock \showarticletitle{In-Datacenter Performance Analysis of a Tensor
  Processing Unit}.
\newblock  (\bibinfo{year}{2017}).
\newblock


\bibitem[Lamport(1978)]%
        {lamport}
\bibfield{author}{\bibinfo{person}{Leslie Lamport}.}
  \bibinfo{year}{1978}\natexlab{}.
\newblock \showarticletitle{Time, Clocks, and the Ordering of Events in a
  Distributed System}.
\newblock \bibinfo{journal}{\emph{Commun. ACM}} \bibinfo{volume}{21},
  \bibinfo{number}{7} (\bibinfo{year}{1978}).
\newblock


\bibitem[Lamport(1994)]%
        {tla-lamport}
\bibfield{author}{\bibinfo{person}{Leslie Lamport}.}
  \bibinfo{year}{1994}\natexlab{}.
\newblock \showarticletitle{The Temporal Logic of Actions}.
\newblock \bibinfo{journal}{\emph{ACM TOPLAS}} \bibinfo{volume}{16},
  \bibinfo{number}{3}, \bibinfo{pages}{872--923}.
\newblock


\bibitem[Leroy(2009)]%
        {compcert}
\bibfield{author}{\bibinfo{person}{Xavier Leroy}.}
  \bibinfo{year}{2009}\natexlab{}.
\newblock \showarticletitle{Formal Verification of a Realistic Compiler}.
\newblock \bibinfo{journal}{\emph{Commun. ACM}} (\bibinfo{year}{2009}).
\newblock


\bibitem[Li and Gopalakrishnan(2010)]%
        {li-gpu-determinism}
\bibfield{author}{\bibinfo{person}{Guodong Li} {and} \bibinfo{person}{Ganesh
  Gopalakrishnan}.} \bibinfo{year}{2010}\natexlab{}.
\newblock \showarticletitle{Scalable {SMT}-Based Verification of {GPU} Kernel
  Functions}. In \bibinfo{booktitle}{\emph{FSE}}.
\newblock


\bibitem[Li et~al\mbox{.}(2012)]%
        {gklee}
\bibfield{author}{\bibinfo{person}{Guodong Li}, \bibinfo{person}{Peng Li},
  \bibinfo{person}{Ganesh Gopalakrishnan}, \bibinfo{person}{Shan Lu}, {and}
  \bibinfo{person}{Jeremy Cong}.} \bibinfo{year}{2012}\natexlab{}.
\newblock \showarticletitle{{GKLEE}: Concolic Verification and Test Generation
  for GPUs}. In \bibinfo{booktitle}{\emph{PLDI}}.
\newblock


\bibitem[Li et~al\mbox{.}(2014a)]%
        {racefuzzer}
\bibfield{author}{\bibinfo{person}{Peng Li}, \bibinfo{person}{Ganesh
  Gopalakrishnan}, {et~al\mbox{.}}} \bibinfo{year}{2014}\natexlab{a}.
\newblock \showarticletitle{Effective Random Testing for Detecting Concurrency
  Bugs in {GPU} Programs}. In \bibinfo{booktitle}{\emph{ASE}}.
\newblock


\bibitem[Li et~al\mbox{.}(2014b)]%
        {symbolic-gpu}
\bibfield{author}{\bibinfo{person}{Peng Li}, \bibinfo{person}{Chang Liu}, {and}
  \bibinfo{person}{Ganesh Gopalakrishnan}.} \bibinfo{year}{2014}\natexlab{b}.
\newblock \showarticletitle{Practical Symbolic Race Checking of {GPU}
  Programs}. In \bibinfo{booktitle}{\emph{SC}}.
\newblock


\bibitem[Li et~al\mbox{.}(2020)]%
        {simulee}
\bibfield{author}{\bibinfo{person}{Ying Li}, \bibinfo{person}{Xiang Gao},
  {et~al\mbox{.}}} \bibinfo{year}{2020}\natexlab{}.
\newblock \showarticletitle{{Simulee}: Detecting CUDA Synchronization Bugs via
  Dynamic Analysis}. In \bibinfo{booktitle}{\emph{ICSE}}.
\newblock


\bibitem[Lopes et~al\mbox{.}(2015)]%
        {alive}
\bibfield{author}{\bibinfo{person}{Nuno~P. Lopes}, \bibinfo{person}{David
  Menendez}, \bibinfo{person}{Santosh Nagarakatte}, {and} \bibinfo{person}{John
  Regehr}.} \bibinfo{year}{2015}\natexlab{}.
\newblock \showarticletitle{Provably Correct Peephole Optimizations with
  {Alive}}. In \bibinfo{booktitle}{\emph{PLDI}}.
\newblock


\bibitem[Lustig et~al\mbox{.}(2019)]%
        {ptxmem}
\bibfield{author}{\bibinfo{person}{Daniel Lustig}, \bibinfo{person}{Sameer
  Sahasrabuddhe}, {and} \bibinfo{person}{Olivier Giroux}.}
  \bibinfo{year}{2019}\natexlab{}.
\newblock \showarticletitle{A Formal Analysis of the {NVIDIA PTX} Memory
  Consistency Model}. In \bibinfo{booktitle}{\emph{ASPLOS}}.
\newblock


\bibitem[Min\'{e}(2006)]%
        {mine-octagon}
\bibfield{author}{\bibinfo{person}{Antoine Min\'{e}}.}
  \bibinfo{year}{2006}\natexlab{}.
\newblock \showarticletitle{The Octagon Abstract Domain}. In
  \bibinfo{booktitle}{\emph{Higher-Order and Symbolic Computation}},
  Vol.~\bibinfo{volume}{19}. \bibinfo{pages}{31--100}.
\newblock


\bibitem[Netzer and Miller(1992)]%
        {netzer-miller}
\bibfield{author}{\bibinfo{person}{Robert H.~B. Netzer} {and}
  \bibinfo{person}{Barton~P. Miller}.} \bibinfo{year}{1992}\natexlab{}.
\newblock \showarticletitle{What Are Race Conditions? Some Issues and
  Formalizations}.
\newblock \bibinfo{journal}{\emph{ACM Lett. Program. Lang. Syst.}}
  \bibinfo{volume}{1}, \bibinfo{number}{1} (\bibinfo{year}{1992}),
  \bibinfo{pages}{74--88}.
\newblock


\bibitem[{NVIDIA}(2024a)]%
        {cuda-racecheck}
\bibfield{author}{\bibinfo{person}{{NVIDIA}}.}
  \bibinfo{year}{2024}\natexlab{a}.
\newblock \bibinfo{title}{CUDA-MEMCHECK Racecheck Tool Documentation}.
\newblock \bibinfo{howpublished}{NVIDIA Developer Documentation}.
\newblock


\bibitem[{NVIDIA}(2024b)]%
        {tensorrt}
\bibfield{author}{\bibinfo{person}{{NVIDIA}}.}
  \bibinfo{year}{2024}\natexlab{b}.
\newblock \bibinfo{title}{TensorRT Developer Guide}.
\newblock \bibinfo{howpublished}{Developer Documentation}.
\newblock


\bibitem[Ragan-Kelley et~al\mbox{.}(2013)]%
        {halide}
\bibfield{author}{\bibinfo{person}{Jonathan Ragan-Kelley},
  \bibinfo{person}{Andrew Adams}, \bibinfo{person}{Sylvain Paris},
  \bibinfo{person}{Marc Levoy}, \bibinfo{person}{Saman Amarasinghe}, {and}
  \bibinfo{person}{Fr{\'e}do Durand}.} \bibinfo{year}{2013}\natexlab{}.
\newblock \showarticletitle{Halide: Decoupling Algorithms from Schedules for
  High-Performance Image Processing}. In \bibinfo{booktitle}{\emph{PLDI}}.
\newblock


\bibitem[Reps et~al\mbox{.}(1995)]%
        {reps-horwitz-sagiv}
\bibfield{author}{\bibinfo{person}{Thomas Reps}, \bibinfo{person}{Susan
  Horwitz}, {and} \bibinfo{person}{Mooly Sagiv}.}
  \bibinfo{year}{1995}\natexlab{}.
\newblock \showarticletitle{Precise Interprocedural Dataflow Analysis via Graph
  Reachability}.
\newblock  (\bibinfo{year}{1995}).
\newblock


\bibitem[Savage et~al\mbox{.}(1997)]%
        {eraser}
\bibfield{author}{\bibinfo{person}{Stefan Savage}, \bibinfo{person}{Michael
  Burrows}, \bibinfo{person}{Greg Nelson}, \bibinfo{person}{Patrick
  Sobalvarro}, {and} \bibinfo{person}{Thomas Anderson}.}
  \bibinfo{year}{1997}\natexlab{}.
\newblock \showarticletitle{Eraser: A Dynamic Data Race Detector for
  Multithreaded Programs}.
\newblock \bibinfo{journal}{\emph{ACM Trans. Comput. Syst.}}
  \bibinfo{volume}{15}, \bibinfo{number}{4}, \bibinfo{pages}{391--411}.
\newblock


\bibitem[Serebryany and Iskhodzhanov(2009)]%
        {tsan}
\bibfield{author}{\bibinfo{person}{Konstantin Serebryany} {and}
  \bibinfo{person}{Timur Iskhodzhanov}.} \bibinfo{year}{2009}\natexlab{}.
\newblock \showarticletitle{{ThreadSanitizer}: Data Race Detection in
  Practice}. In \bibinfo{booktitle}{\emph{WBIA (Workshop on Binary
  Instrumentation and Applications)}}.
\newblock


\bibitem[Tillet et~al\mbox{.}(2019)]%
        {triton}
\bibfield{author}{\bibinfo{person}{Philippe Tillet}, \bibinfo{person}{H.~T.
  Kung}, {and} \bibinfo{person}{David Cox}.} \bibinfo{year}{2019}\natexlab{}.
\newblock \showarticletitle{Triton: An Intermediate Language and Compiler for
  Tiled Neural Network Computations}. In \bibinfo{booktitle}{\emph{MLSys}}.
\newblock


\bibitem[Wickerson et~al\mbox{.}(2017)]%
        {wickerson-memalloy}
\bibfield{author}{\bibinfo{person}{John Wickerson}, \bibinfo{person}{Mark
  Batty}, \bibinfo{person}{Tyler Sorensen}, {and} \bibinfo{person}{George~A.
  Constantinides}.} \bibinfo{year}{2017}\natexlab{}.
\newblock \showarticletitle{Automatically Comparing Memory Consistency Models}.
  In \bibinfo{booktitle}{\emph{POPL}}.
\newblock


\bibitem[Zhao et~al\mbox{.}(2021)]%
        {akg}
\bibfield{author}{\bibinfo{person}{Jie Zhao}, \bibinfo{person}{Bojie Li},
  \bibinfo{person}{Wang Nie}, \bibinfo{person}{Zhen Geng},
  \bibinfo{person}{Renwei Zhang}, \bibinfo{person}{Xiong Gao},
  \bibinfo{person}{Bin Cheng}, \bibinfo{person}{Chen Wu}, \bibinfo{person}{Yun
  Cheng}, \bibinfo{person}{Zheng Li}, \bibinfo{person}{Peng Di},
  \bibinfo{person}{Kun Zhang}, {and} \bibinfo{person}{Xuefeng Jin}.}
  \bibinfo{year}{2021}\natexlab{}.
\newblock \showarticletitle{{AKG}: Automatic Kernel Generation for Neural
  Processing Units using Polyhedral Transformations}. In
  \bibinfo{booktitle}{\emph{PLDI}}.
\newblock


\bibitem[Zheng et~al\mbox{.}(2020)]%
        {ansor}
\bibfield{author}{\bibinfo{person}{Lianmin Zheng}, \bibinfo{person}{Chengfan
  Jia}, \bibinfo{person}{Minmin Sun}, \bibinfo{person}{Zhao Wu},
  \bibinfo{person}{Cody~Hao Yu}, \bibinfo{person}{Ameer Haj-Ali},
  \bibinfo{person}{Yida Wang}, \bibinfo{person}{Jun Yang},
  \bibinfo{person}{Danyang Zhuo}, \bibinfo{person}{Koushik Sen},
  \bibinfo{person}{Joseph~E. Gonzalez}, {and} \bibinfo{person}{Ion Stoica}.}
  \bibinfo{year}{2020}\natexlab{}.
\newblock \showarticletitle{Ansor: Generating High-Performance Tensor Programs
  for Deep Learning}. In \bibinfo{booktitle}{\emph{OSDI}}.
\newblock


\bibitem[Zhu et~al\mbox{.}(2022)]%
        {roller}
\bibfield{author}{\bibinfo{person}{Hongyu Zhu}, \bibinfo{person}{Ruofan Wu},
  \bibinfo{person}{Yijia Diao}, \bibinfo{person}{Shanbin Ke},
  \bibinfo{person}{Haoyu Li}, \bibinfo{person}{Chen Zhang},
  \bibinfo{person}{Jilong Xue}, \bibinfo{person}{Lingxiao Ma},
  \bibinfo{person}{Yuqing Xia}, \bibinfo{person}{Wei Cui}, \bibinfo{person}{Fan
  Yang}, \bibinfo{person}{Mao Yang}, \bibinfo{person}{Lidong Zhou},
  \bibinfo{person}{Asaf Cidon}, {and} \bibinfo{person}{Gennady Pekhimenko}.}
  \bibinfo{year}{2022}\natexlab{}.
\newblock \showarticletitle{{ROLLER}: Fast and Efficient Tensor Compilation for
  Deep Learning}. In \bibinfo{booktitle}{\emph{OSDI}}.
\newblock


\end{thebibliography}
\end{document}